\documentclass{article}
\usepackage[utf8]{inputenc}
\usepackage{graphicx}  
\usepackage{dcolumn}   
\usepackage{bm}        
\usepackage{amssymb}   
\usepackage{amsmath}   
\usepackage{mathrsfs}
\usepackage[normalem]{ulem}
\usepackage{color}
\usepackage{empheq}
\usepackage{hyperref}
\usepackage[modulo]{lineno}
\usepackage{float}
\usepackage{soul}
\usepackage{authblk}
\usepackage[
backend=biber,
style=alphabetic,
sorting=ynt
]{biblatex}
\addbibresource{Cox.bib}

\usepackage{amsthm}
\newtheorem{theorem}{Theorem}
\newtheorem{proposition}{Proposition}
\newtheorem{remark}{Remark}

\newtheorem{definition}{Definition}

















\newcommand{\comment}[1]{{{#1}}} 

\setlength{\textwidth}{170mm}
\setlength{\textheight}{22cm}
\setlength{\oddsidemargin}{-5mm}
\setlength{\topmargin}{-2cm}

\setcounter{secnumdepth}{5}

\title{Limit Order Book (LOB) shape modeling in presence of heterogeneously informed market participants}

\author[1]{Mouhamad DRAME}
\affil[1]{FiQuant, Laboratoire de Mathematiques et Informatique pour la Complexite et les Systemes, CentraleSupelec, Universite Paris-Saclay,3 rue Joliot Curie, 91190 Gif-sur-Yvette}
\setcounter{Maxaffil}{0}

\begin{document}

\maketitle

\begin{abstract}

The modeling of the limit order book is directly related to the assumptions on the behavior of real market participants. This paper is twofold. We first present empirical findings that lay the ground for two improvements to these models.The first one is concerned with market participants by adding the additional dimension of informed market makers, whereas the second, and maybe more original one, addresses the race in the book between informed traders and informed market makers leading to different shapes of the order book.
\\
Namely we build an agent-based model for the order book with four types of market participants: informed trader, noise trader, informed market makers and noise market makers. We build our model based on the Glosten-Milgrom approach and the most recent Huang-Rosenbaum-Saliba approach. We introduce a parameter capturing the race between informed liquidity traders and suppliers after a new information on the fundamental value of the asset. We then derive the whole `static" limit order book and its characteristics -namely the bid-ask spread and volumes available at each level price- from the interactions between the agents and compare it with the pre-existing model. We then discuss the case where noise traders have an impact on the fundamental value of the asset and extend the model to take into account many kinds of informed market makers. 
\\
\newline
\textit{Keywords}: Price formation, High frequency trading, Limit order book, Asymmetry of information, Adverse selection, Bid-ask spread, cross-impact, speed-bump
\end{abstract}

\tableofcontents

\section{Introduction}

Most modern financial markets are order-driven markets, in which all of the market participants display the price at which they wish to buy or sell a traded security, as well as the desired quantity. This model is widely adopted for stocks and futures, due to its superior transparency.

A market for a security is 'liquid' if investors can buy or sell large amounts of the security at a low transaction cost. Liquidity is a valuable characteristic of an asset because it allows investors to trade at prices close to their decision price. Liquidity is generally supplied by market makers, who are willing to take the other side of a trade for a premium relative to the current fundamental value. Traders, who are willing to pay this premium in order to execute a trade immediately, demand liquidity.\\

In an order-driven market, all the standing buy and sell orders are centralized in the limit order book (LOB). Orders in the LOB are generally prioritized according to price and then to time according to a FIFO (First-In First-Out) rule. Some exchanges prioritize according to a proportional rule and others use a mix of these two rules.

With these electronic markets, trading strategies have become more and more important. In
particular, market making - liquidity providing - strategies lay at the core of modern markets. Since 
there are no more designed market makers, every market participant can\footnote{sometimes they must post 
limit orders for tactical reasons} provide liquidity to the market, and the choice of quantity of shares 
to add -otherwise the LOB shape- is a question of crucial practical relevance.\\

A trader may desire to trade immediately because he has some private information about the future value of the
asset or because he wants to rebalance his portfolio. The presence of traders with private information 
exposes the market makers to adverse selection risk and consequently impacts the prices they quote. 
The specific trading rules of the exchange also impacts the premium that the trader pays for liquidity. 
The interaction between trader information, the market makers and the trading rules is at the heart of 
many policy questions to improve market quality. In order to answer questions about if large trading 
costs (i.e high spread and/or small market depth) are due to adverse selection costs or strategic 
market makers, it is necessary to consider models that can incorporate these effects.\\

Our starting point is to assess the empirical literature considering a generic group of market makers 
with homogeneous level of information. Several models in the literature study the LOB assuming the presence of
one type of market makers -while it is generally admitted that asymmetry of information exist 
between speculators traders (considered informed) and pure liquidity traders (considered uninformed). 
For instance the papers \protect\cite{BaruchGlosten2019, huang2019glosten,glosten1985bid} consider all identical liquidity suppliers with no asymmetry of information between them.\\

Our paper differs from these by introducing an asymmetry of information between liquidity suppliers. We empirically bring evidence of this heterogeneous level of information. We consider for this purpose the ``trade signature'', a metric to empirically assess the level of information of both liquidity traders and liquidity providers. We identify different clusters of liquidity traders and liquidity providers based on different metrics and build on this an agent-based model to derive the whole LOB shape.\\

Our model shares therefore multiple similarities with the existing models by using the conditional tail expectation \protect\cite{glosten1994electronic}, which is in our case (where liquidity suppliers face informed traders and have some latency with respect to these informed traders) a zero-profit condition for an order added at the back of a queue.\\

After the preamble on the nature of market participants, we build our agent-based model as in \protect\cite{huang2019glosten, glosten1985bid}. Contrary to these papers we consider 4 types of market participants: an informed trader, a noise trader, informed market makers and noise market makers. The informed trader and the informed market maker receive information on the efficient price and try to take advantage of this knowledge by respectively sending a market order and a cancel order (with a given probability that the informed market maker order is executed first). The noise market makers receive this information with some delay. The noise trader sends order in a zero intelligence way. We first consider the case when the noise trader does not influence the efficient price and later discuss the case where noise traders have some toxicity for liquidity providers.\\

We analyze how adverse selection, competition between market makers and liquidity traders affect the liquidity in this pure limit order market. In this type of market, the cost of liquidity at any point in time is determined by the LOB, which contains all outstanding limit orders. A market maker's decision to submit a limit order to the LOB involves a trade-off between the premium and the adverse selection he faces.

We deduce from this framework the link between the equilibrium state of the LOB, the different intensities of the dynamics of the efficient price move, the race to order insertion parameter between informed market makers and informed traders, the bid-ask spread. The bid-ask spread emerges as the result of the negative expected returns of limit orders placed too close to the efficient price. Most interestingly we find that due to the presence of informed market makers the bid-ask spread reduces.\\

The effect of market composition on spread is as follows. When informed traders are faster than market makers(all things being equal), the spread takes a non-zero value as market makers post their order far to avoid adverse selection. Market makers therefore stop providing liquidity when the fraction of informed traders comprising the market becomes sufficiently large. But when informed market makers are faster than informed traders, the spread becomes smaller and the liquidity increase due to the surplus of liquidity from informed market makers that will be less adverse selected. In the case informed market makers are sufficiently faster than informed traders we find that the spread vanishes and the liquidity increases (this is true if all the informed market makers are homogeneously informed but as soon as there is an asymmetry between them, the liquidity decreases).\\

The rest of this paper is organized as follows. In section 2 we do some empirical studies to cluster liquidity traders and liquidity suppliers and present the considered data-set. In section 3 we describe our model with the different market participants. We define a race parameter which takes into account the race between the informed market makers and the informed trader. We solve the model in the case the noise traders do not have any toxicity for the market makers (by first solving the non-zero tick size case and then the the case with tick size). We give different results on the shape of the book and demonstrate that the noise market makers and the informed ones have different optimal shapes. We derive the bid-ask spreads in this case and compare it to the case when the informed trader is always faster than the informed market makers. We then show how the model can be extended to the case where noise traders have some toxicity for the liquidity suppliers. In section 4 we extend the model to take into account many types of informed makers and discuss how this influences market quality. In the last section, we discuss our modeling choice before concluding.
\\

In all the following bold uppercase symbols denote matrices, bold lower cases denote vectors and light lower cases denote scalars.

\section{How informed are market participants?}
Modern markets are two-sided financial markets with market makers quoting bid and ask prices on one side of the market and traders submitting market orders on the other side.

The market makers buy low, and sell high, adjusting their bid-ask spreads in accordance with the
adverse selection\footnote{buying when the price is going down or selling when it is going up, i.e the correlation between trade and delta price} they face. Traders, on the other hand, benefit from market makers competing to offer the best quotes. The set of traders comprises informed traders (speculators) as well as noise traders (liquidity traders who trade for reasons due to liquidity shocks unrelated to the asset value).

Since Kyle, and Glosten and Milgrom \protect\cite{glosten1985bid,kyle1985continuous}, a common assumption in market microstructure literature on asymmetric information is that while market-makers do not have superior information on market fundamentals, some traders have private information. The canonical model of dealer markets is due to Glosten and Milgrom \protect\cite{glosten1985bid}, who assume that traders are better informed than market makers.

Empirical evidence show that market makers can be better informed than some traders in the information acquisition game. \protect\cite{moinas2010hidden} analyzes potentially informed liquidity provision with hidden limit orders and shows that informed market makers act on dark pools to benefit from their superior information, \protect\cite{biais1993price} who considers market makers with private information about inventories.

In the following, we present some statistical properties on market participants showing heterogeneously level of information for both liquidity takers and liquidity providers.

\subsection{Empirical clustering of market participants}

\subsubsection{Trade signature}
We define the trade signature at horizon k as:
 $$ST(k)=\epsilon \frac{\sum_{t} Q_{t}\left(X_{t+k}-P_{t}\right)}{\sum_{t}\left|Q_{t}\right|}$$
where 
\begin{itemize}
    \item $X_t$ denotes the asset's efficient price at time $t$
    \item $P_t$ denotes the effective trade price at time $t$
    \item $Q_t$ denotes the trade quantity at time $t$ (positive if the trade triggered by the liquidity taker is a buy, negative otherwise)
    \item $\epsilon$ is equal to 1 for trade signature of aggressive traders and -1 for liquidity providers.
\end{itemize}

The rationale behind the trade signature is that informed traders should win on average and thus have their signature positive at more or less short horizons while noise traders should lose on average and have their signature negative at intraday horizons. The trade signature is actually a normalized $P\&L$ for market participants with respect to their efficient price.

In the following we consider (unless formally specified) the micro-price as the efficient price. The micro-price is defined as 
$$
m p_{t}=\frac{b i d_{t} V_{t}^{a}+a s k_{t} V_{t}^{b}}{V_{t}^{a}+V_{t}^{b}}
$$

where $bid_t$ (resp $ask_t$) denotes the bid (resp ask) price at time $t$ and $V_t^b$ (resp $V_t^a$) the volume available at the best bid (resp ask) limit in the book. 
We define also the mid-price as 
$$
mid_{t}=\frac{bid_{t}+ask_{t}}{2}
$$
The microprice has nice properties compared to the mid-price (for example the intuitive idea that the efficient price is closer to the side of the book where there is less volume is well taken into account by the microprice).

We will mark all the signature with respect to the micro price and to the mid-price in order to avoid any artifact due to liquidity at best limits.

\subsubsection{Clustering methodologies and their rationale}

We cluster liquidity providers and liquidity takers based on many criteria related to order deposition time, order modification, order volumes and order posted price distance to efficient price.
In the following we will denote passive traders by PT and aggressive traders by AT.

\paragraph{Clustering based on time}
The idea of clustering traders based on the time they send their orders is natural as one expect
that informed traders will try to profit from their superior information as soon as they have a new information. Thus they should send market orders as soon as they see a new update in the efficient price. In the same way informed market makers are expected to send their limit orders soon to gain good queue positions.

\subparagraph{Clustering Market Makers} 
We use many time metrics to cluster market makers.
\begin{enumerate}
    \item \textbf{Trade to add time:} We consider a market maker as informed (conditional on his order 
    being executed) if the duration between the last trade before he adds his order and the moment he adds 
    his order is lower than a given threshold and otherwise uninformed. We actually generalize this 
    clustering method to take into account heterogeneously level of information for market makers. Given 
    the thresholds $th_{1} < ... < th_{n-1}$ we create $n$ clusters of passive 
    traders $PT_{0}$, ..., $PT_{n-1}$ where $PT_{0}$ is the best informed passive trader and $PT_{n-1}$ the 
    less informed one. The rationale is that informed PT will act quickly after a trade to add quotes.
    
    \item \textbf{Add to add time:} Another criteria to cluster liquidity provider is to consider the 
    duration between the last add time of an order at a level price and the moment he adds at the same 
    level price (conditional on the order being executed). We expect that uninformed traders add orders at 
    a given price far from the last add time at the same level price, while informed may add in a 
    clustered way.
\end{enumerate}

\subparagraph{Clustering Aggressive Traders}
We cluster aggressive traders based on the duration between the time they send their trade and
the last trade time before their orders execution based on the idea that informed traders will rush for liquidity at the same time buckets (they generally use the same sources of information).
Given the thresholds $th_{1} < ... < th_{n-1}$ we create n clusters of aggressive traders $AT_{0}$, ..., $AT_{n-1}$ where $AT_{0}$ is the best informed aggressive trader and $AT_{n-1}$ the less informed one.

\paragraph{Clustering based on trade volumes}
We cluster AT based on the liquidity they take in the book. We assume that informed AT will send
market orders that take a high proportion or deplete the limit they trade with (or take exactly the quantity available at the best limit they consume). The criteria we consider is thus the ratio between the traded volume and the available volume at the considered limit before the trade. Informed traders are supposed to send market orders that consume exactly the best limit -in a pure utilitarian point of view as they are informed on the next efficient price jump they have interest of sending orders that consume all the available liquidity up to the jump and thus deplete exactly a limit order (or more if the jump size is greater than the first limit price).

\paragraph{Clustering based on updates}
We cluster LP based on the number of updates (modifications of price or size) of their order between
the time they add it and the time it is executed. The motivation is that informed PT may update their orders to incorporate new information.

\subsection{Data Overview}

Our attention is restricted to US treasury bonds futures traded on CME exchange. We consider 5 US treasury bonds futures, namely UB(Ultra US Treasury Bond), ZB(US Treasury Bond), ZF(5-Year T-Note), ZN(10-Year T-Note) and ZT(2-Year T-Note) traded on CME for a period of 144 trading days starting from June 04,2018 to Jan 02,2019. The trading hours (in US central time) go from Sunday to Friday each day between 5:00 pm to 4:00 pm (with a 60-minute break each day beginning at 4:00 pm). We used the so-called market-by-order data feed as opposed to the market-by-level data. The market-by-order contains all order book events including limit order postings, trades, and limit order cancellations with their given ids. Market-by-order makes it possible to follow each order between the time it is added in the book and the time it is executed or canceled with all the modifications of the order. Thus it is easy to compute all the clustering classifications for each order executed and to compute easily trade signature by clusters. It also allow to easily compute the realized gain of orders with respect to the efficient price and to compare it with the theoretical gains as computed by the market makers prior to adding them (conditional on their execution right now).

\begin{figure}[H]
\centering
 \includegraphics[width=\textwidth]{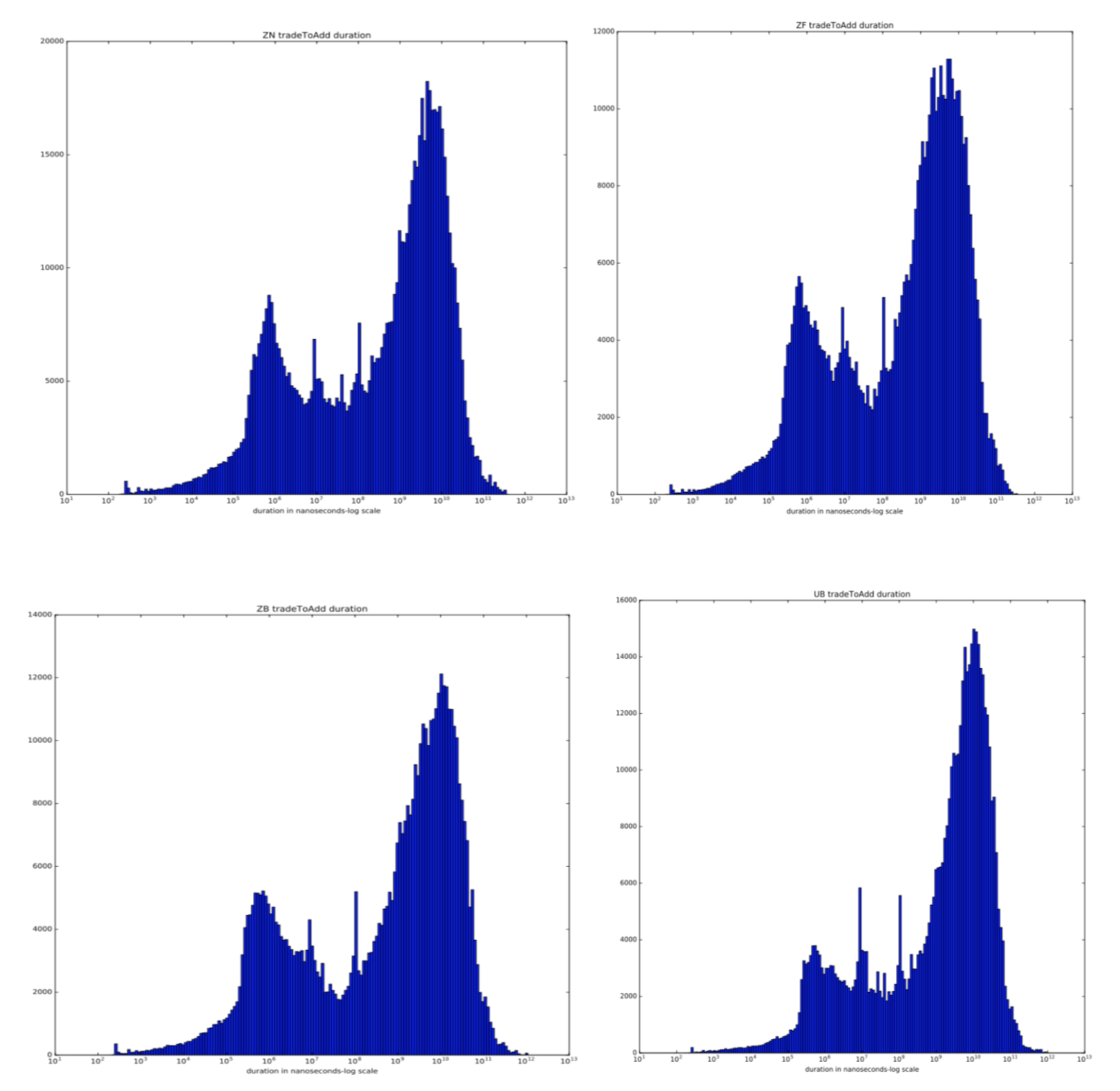}
 \caption{\textbf{Histograms for duration between last trade event and new add event for different US Treasuries Futures on 27/11/2018.}\newline
 Top left: ZN, top right ZF, bottom left: ZB, bottom right: UB
 The characteristic peaks are consistent across days.}
 \label{fig:tradetoadd}
\end{figure}

\begin{figure}[H]
\centering
 \includegraphics[width=\textwidth]{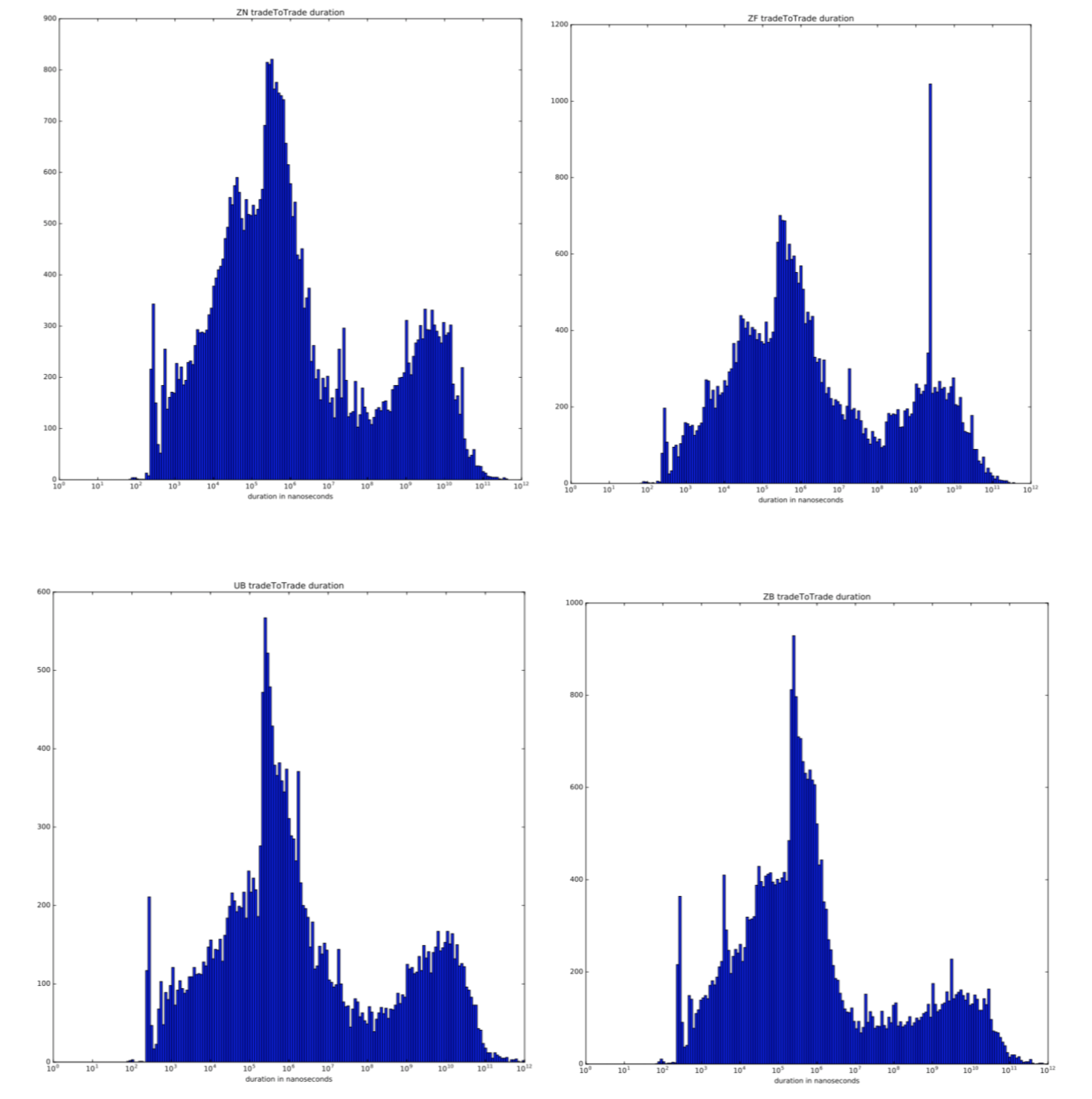}
 \caption{\textbf{Histograms for duration between last trade event and the last trade before on the same product for different US Treasuries Futures on 27/11/2018.}\newline
 Top left: ZN, top right ZF, bottom left: ZB, bottom right: UB \newline
 The characteristic peaks are consistent across days.}
 \label{fig:tradetotrade}
\end{figure}

\newpage

\subsection{Analysis of trade signatures by market participants groups}
We present some results on the clustering based on time and based on volumes in this part (the other results are relegated in appendix \eqref{app:graphs}).\\

We precise that the different clustering are standalone clustering where it is not necessary to have the type of the order flow categories of the agents (such as retail flow, high frequency traders, etc). The authors of \protect\cite{megarbane2017behavior} for example use a classification based on data set provided by the French regulator ``Autorité des Marchés Financiers". So we do not need to have the so-called ``client id".
Our different classifications are $a$ $posteriori$ clustering but one can for clustering depending on time do it online with appropriate market data (market by order data with order ids and all details of an order). \\ 

We compute the different trade signatures at time horizons up to 1000 seconds.\\

In Fig.~\ref{fig:signaturetradetoadd} and Fig.~\ref{fig:signaturevolume} we present the results with a clustering in 2 groups for AT and for PT. \\

As expected at the time of trade the trade signature for aggressive orders (resp passive orders) is 
negative (resp positive) due to the basic fact that liquidity takers start by losing money by crossing the spread. 

The trade signature remains constant on horizon up to more or less 100 ns depending on the considered asset due to the fact that the efficient price does not move in this interval just after a trade. Then it increases (resp decreases) and becomes (resp stays) positive for informed aggressive traders (resp informed passive traders) while it stays negative for uninformed aggressive traders (informed and uninformed depend on our classification) and this is not perfectly symmetric for liquidity providers and liquidity takers as our classification does not enforce to have an informed passive trading against an informed aggressive.

The different metrics we define allow a separation in different clusters based on market participants realized $P\&L$.

It is therefore important to consider that a market maker can be informed and to incorporate this in LOB modeling. The results in appendix \eqref{app:graphs} suggest also that there are different level of information between informed market makers (the ones with positive signatures) and so one also should take this into account when modeling the order book. 

\begin{figure}[H]
\centering
 \includegraphics[width=\textwidth]{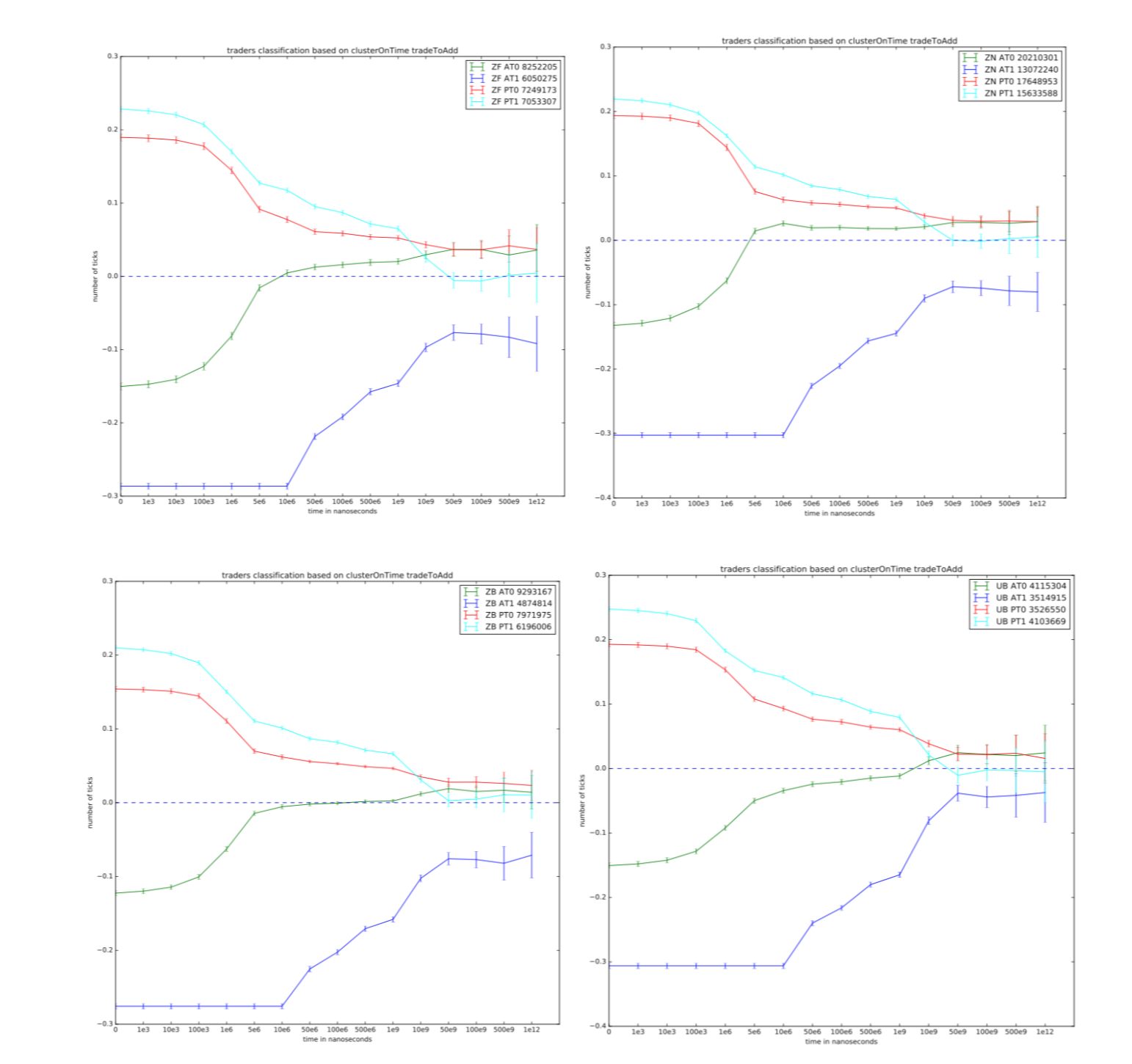}
 \caption{\textbf{Trade signatures for different products based on trade to add time clustering for passive traders and trade to trade time for aggressive traders.}\newline
 The numbers indicated correspond to the number of points in each cluster during the considered period (144 days) \newline
 Top left: ZF, top right ZN, bottom left: ZB, bottom right: UB \newline
 We use a threshold of $10^{7}ns$ for both PT and AT.}
 \label{fig:signaturetradetoadd}
\end{figure}

\begin{figure}[H]
\centering
 \includegraphics[width=\textwidth]{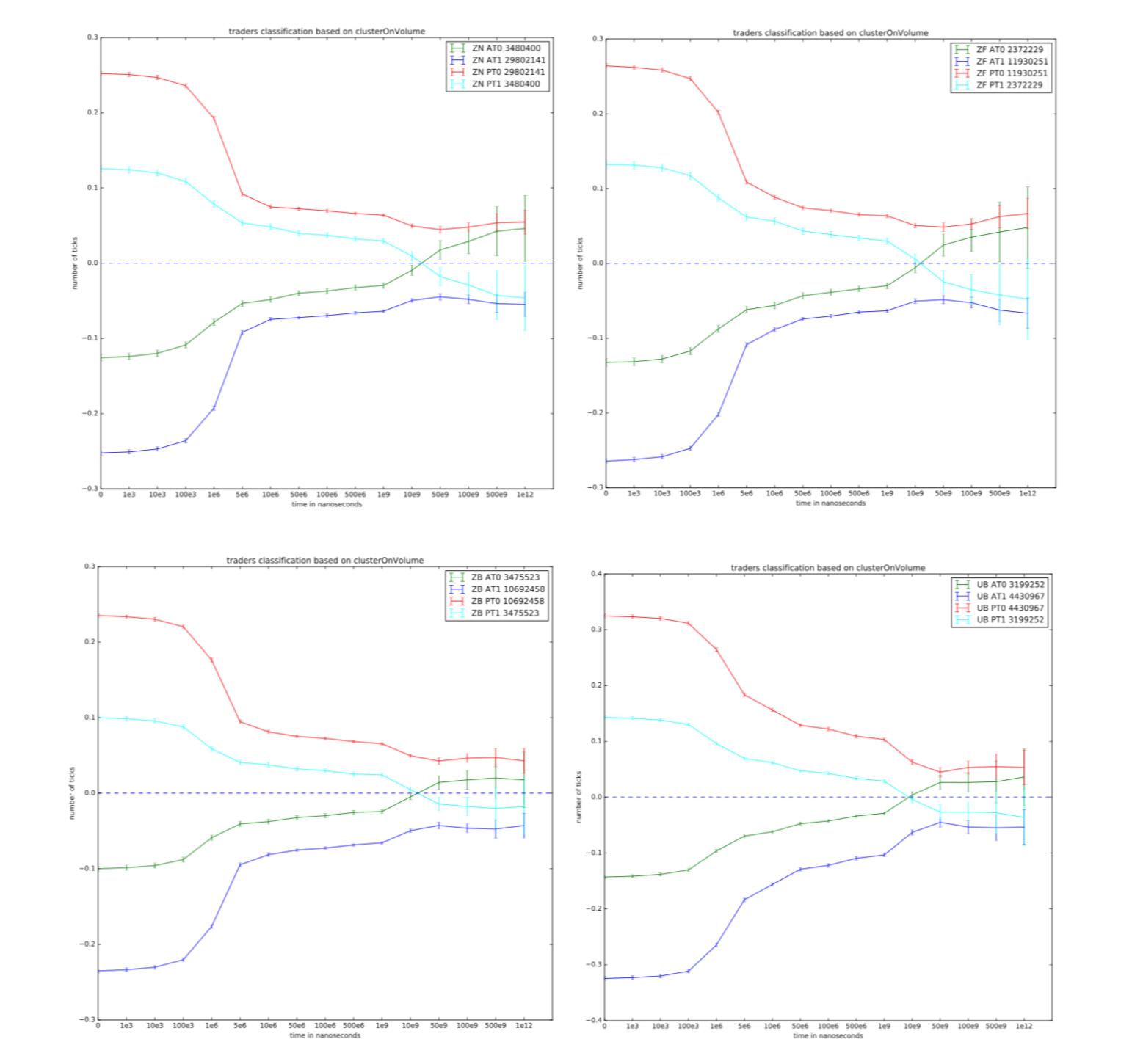}
 \caption{\textbf{Trade signatures for different products based on volume clustering}\newline
 The numbers indicated correspond to the number of points in each cluster during the considered period (144 days) \newline
 Top left: ZN, top right ZF, bottom left: ZB, bottom right: UB \newline
 We use a threshold of 0.1 of the ratio between traded volume and available volume at best limit at the time of trade to separate the two groups}
 \label{fig:signaturevolume}
\end{figure}

\section{The model}

\subsection{Efficient price dynamics}
We consider an asset whose underlying efficient price is composed of a jump component with Poisson arrival times and a noise trader component as follows:
$$
P(t)=P_{0}+\sum_{l=1}^{N_{t}} B_{l}+\sum_{j=1}^{N_{t}^{\prime}} \theta\left(Q_{j}^{u}\right)\left(X_{j}-\rho X_{j-1}\right)
$$
where $P_0 > 0$ and the innovation part is composed of 2 terms:
\begin{itemize}
    \item $\sum_{l=1}^{N_{t}} B_{l}$ is a compound Poisson process and $(N_{t})_{t \geq 0}$ is a Poisson process
    with intensity $\lambda_i$ and l=1 the $(B_{l})_{l>}0$ are i.i.d square integrable random 
    variables. Thus new information arrives on the market in discrete times given with a 
    Poisson process. At the arrival time of the new information, the efficient price moves.
    \item $\sum_{j=1}^{N_{t}^{\prime}} \theta\left(Q_{j}^{u}\right)\left(X_{j}-\rho X_{j-1}\right)$ where $(N^{\prime}_{t})_{t>0}$ is a Poisson process with intensity $\lambda_u$, $(Q_{j}^{u})_{j>0}$ the noise 
    traders volume distributions and $\theta$ a function that measures the size of noise traders trade impact, $X_j$ is the side of the $j^{th}$ trade of the noise traders(1 if noise trader's trade $j$ is a buyer initiated and -1 otherwise), $\rho$ is the first order auto correlation of the trade signs $X$. We assume that $(Q)$ and $(B)$ are independent.\\ 
    This part captures the efficient price move due to noise traders , particularly efficient price move
    due to the surprise component of the noise traders flow -the surprise component of the noise traders 
    flow is assumed to be $X_{j}-\rho X_{j-1}$ (a buy order followed by another buy order has less 
    surprise than a buy order followed by a sell order). We follow an here an 
    approach $a$ $la$ \protect\cite{madhavan1997security} to capture the toxicity due to the noise traders as 
    it is not really exact to assume it null.\\
    Let's define $\gamma=\mathbb{P}\left(X_{j}=X_{j-1}\right)$ and $\rho=\frac{\mathbb{E}\left(X_{j} X_{j-1}\right)}{\mathbb{V}\left(X_{i-1}\right)}$, we can easily draw that $\rho=2\gamma -1$. The case
    $\rho=0$ corresponds to independent trade signs whereas $\rho>0$ describes a positive auto-correlation of trade signs. \footnote{One critic to this modeling is that correlation decay exponentially $\mathbb{E}\left(X_{j} X_{j+k}\right)=\rho^{k}$ See \protect\cite{bouchaud2018trades, garcia2020multivariate} for a recent discussion}. We will assume that $\theta$ is a constant without loss of generality. One can make it dependent on the volume with the approach suggested in \protect\cite{madhavan1997security}: if we have k typical size ranges on can choose k parameters for each volume bucket but this will add complexity to the problem while the qualitative conclusions remain the same.
\end{itemize}

\subsection{Market participants}
We assume as discussed in the first part that there are four types of market participants:
\begin{itemize}
    \item One informed trader
    \item Informed market makers\\
    The informed market makers and the informed trader have superior information on the fair price 
    dynamics. They have low latency infrastructure\footnote{Having information first is not enough and an asymmetry can appear here between the IT and the IMM. C.A.Lehalle and P.Besson \protect\cite{lehalleclass} made an experiment showing that easy to process news cause price shifts without trades while difficult to process news cause price move with trades. In a word easy to process news allow informed market makers to update their price and thus is more beneficial for them while difficult to process news cause price moves with trades and so are more profitable to informed traders(one can imagine that IT can not value quickly with accuracy the news but know that on average following it is successful and thus react and adjust later while market makers have no incentives to cancel as they can make profit with traders misinterpreting the news). We will not consider this asymmetry in the following between the IT and the IMM.} and can exploit the market inefficiencies.
    They are able to assess efficient price jumps before other participants by using different techniques(statistics, news etc). We assume they receive the price jump just before it happens. Once there is a price jump, they compete to profit from this information. We introduce a parameter $f$ with $0\leq f \leq 1$ to measure the probability that the informed trader order is processed before the informed market maker one (so the probability that the informed market maker cancels his order is $1-f$). The introduction of the IMM and this parameter are the main differences and bring the main effects compared to the conclusions drawn by \protect\cite{huang2019glosten}
    \item One noise trader: he sends market orders without intelligence. Usually he is a liquidity taker trading for liquidity shocks unrelated to the actual dynamics. We assume that theses trades follow a compound Poisson process with intensity $\lambda_u$. We denote by $f_u$ the density of the noise trader volumes and $F_u$ their cumulative distribution function. 
    \item Noise market makers: We call them noise market makers as opposed to the informed market makers because we assume they don't receive the efficient price jump before it happens but right after it happens. We assume that they are risk neutral. We assume that they know the ratio of price jumps compared to the number of events happening in the market.
\end{itemize}

We denote by $L$ the cumulative shape function. $L(x)$ denote the available liquidity between prices $P(t)$ and $P(t)+x$ (We assume that there is no tick size for the moment and will relax this assumption later in the same way that \protect\cite{huang2019glosten}).

In the following parts we derive the results for the ask side of the book
(deriving them for the bid side can be done in a similar way).\\

Let's recall our main assumption: at a time of efficient price jump, the informed market maker sends an order to cancel his staying order in the limit order book on the side of the jump to avoid being adverse selected by the informed trader (one can imagine that he cancels and send a market order to adverse select the noise trader but for the framework this is the same than him canceling and the informed trader adverse selecting the noise trader\footnote{One should think to the ``self trade protection" mechanism in many exchanges that allow a market participant to not trade with himself}). t the same time the informed trader
sends a market order in a greedy way to hit all the available liquidity in the LOB between $P(t)$ and $P(t)+B$ where $B$ is the jump size (actually he sends marketable limit orders up to the level price B in order to avoid trading at a level price greater than B when the informed market makers has successfully canceled his order).
With probability $f$ the informed trader's order is processed before the informed market makers one. The sizable limit order quantity sent by the informed trader is $Q^{i}=L(B)$ where $L$ is as already said the shape of the book. $L$ is actually the sum of the books provided by the IMM and the NMM.

\subsection{Market Makers break-even condition and shape of the book (when $\theta = 0$)}
We derive the shape of the book in the case $\theta = 0$ corresponding to noise traders not impacting the efficient price (We relax this assumption in section \ref{subsec:toxicity}) and consider a general $f$. 
The particular case when $f = 1$ corresponds to the \protect\cite{huang2019glosten} case and we will compare market quality parameters in this particular case and the general case.

\subsubsection{The case without tick size}
We first consider the case without tick size. The general case with tick size is derived from this case in section \ref{subsubsec:ticksize}.
We define $G_{IMM}(x)$ and $G_{NMM}(x)$ as the \emph{conditional average profit} of a new infinitesimal order
if submitted at a price level x by respectively the informed and the noise market makers knowing that it is filled and without any information on the trade's initiator.

We start by computing the market makers expected gains.
\comment{\begin{definition}[Market makers expected gains] We define $G_{IMM}(x)$ and $G_{NMM}(x)$ as the \emph{conditional average profit} of a new infinitesimal order
if submitted at a price level x by respectively the informed and the noise market makers knowing that it is filled and without any information on the trade's initiator.
\emph{To derive $G_{IMM}(x)$ and $G_{NMM}(x)$ we define the following}:
\begin{enumerate}
    \item \emph{agg} a random variable that is equal to IT if the trade is initiated by the informed trader and NT if it is initiated by the noise trader
    \item \emph{$G_{IMM}^{NT}(x)$} the gain of the new infinitesimal order submitted by the IMM knowing that it is filled by the NT
    \item \emph{$G_{IMM}^{IT}(x)$} the gain of the new infinitesimal order submitted by the IMM knowing that it is filled by the IT
    \item \emph{$G_{NMM}^{NT}(x)$} the gain of the new infinitesimal order submitted by the NMM knowing that it is filled by the NT
    \item \emph{$G_{NMM}^{IT}(x)$} the gain of the new infinitesimal order submitted by the NMM knowing that it is filled by the IT
    \item \emph{$r = \frac{\lambda_i}{\lambda_{i} + \lambda_{u}}$} the ratio of efficient price jumps compared to all the events happening in the market
\end{enumerate}
\end{definition}}

\begin{proposition}\label{th:Gain_IMM}
The average profit of a new infinitesimal order if submitted at price level x by the IMM satisfies 

\begin{equation}
\label{eq:Gain_IMM}
G_{IMM}(x)=x-\frac{f r \mathbb{E}\left(B \mathbf{1}_{B>x}\right)}{(1-r) \mathbb{P}\left[Q^{u}>L(x)\right]+f r \mathbb{P}[B>x]}
\end{equation}
\end{proposition}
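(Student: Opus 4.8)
The plan is to compute $G_{IMM}(x)$ by the law of total expectation, conditioning on the identity of the aggressor that fills the infinitesimal order resting at level $x$, i.e. on the variable $agg \in \{IT, NT\}$ introduced above. Writing $G_{IMM}^{IT}(x)$ and $G_{IMM}^{NT}(x)$ for the gains conditional on the aggressor being the informed trader and the noise trader respectively, the decomposition reads
$$G_{IMM}(x) = G_{IMM}^{IT}(x)\,\mathbb{P}[agg=IT\mid\text{fill}] + G_{IMM}^{NT}(x)\,\mathbb{P}[agg=NT\mid\text{fill}].$$
I would then evaluate the two conditional gains and the two conditional probabilities separately before recombining them.

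First I would compute the conditional gains. Since $\theta = 0$, the noise trader carries no information and leaves the efficient price unchanged; an ask order standing at distance $x$ above $P(t)$ that is lifted by the noise trader therefore realizes exactly the premium, so $G_{IMM}^{NT}(x) = x$. The informed trader, by contrast, only sweeps liquidity up to level $B$, so the order at level $x$ is reached precisely when $B > x$; in that event the IMM has sold at $P(t)+x$ while the fundamental has already moved to $P(t)+B$, giving a gain $x - B$. Averaging over the jump law conditional on $B > x$ yields $G_{IMM}^{IT}(x) = x - \mathbb{E}[B\mid B>x]$.

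Next I would read off the fill weights from the event intensities and the race parameter. An arriving event is a price jump with probability $r$ and a noise trade with probability $1-r$. The resting order is filled by the informed trader only when a jump occurs, the informed trader wins the race (probability $f$), and the jump is large enough ($B>x$), so the unconditional IT-fill weight is $rf\,\mathbb{P}[B>x]$. It is filled by the noise trader only when a noise trade occurs and the noise volume exceeds the liquidity standing ahead of it ($Q^{u} > L(x)$), giving an NT-fill weight $(1-r)\,\mathbb{P}[Q^{u}>L(x)]$. Normalizing each by the total fill probability produces $\mathbb{P}[agg=IT\mid\text{fill}]$ and $\mathbb{P}[agg=NT\mid\text{fill}]$.

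Substituting into the decomposition, the two terms proportional to $x$ collapse to $x$ because the conditional probabilities sum to one, leaving
$$G_{IMM}(x) = x - \mathbb{E}[B\mid B>x]\,\frac{rf\,\mathbb{P}[B>x]}{(1-r)\,\mathbb{P}[Q^{u}>L(x)] + rf\,\mathbb{P}[B>x]},$$
and the identity $\mathbb{E}[B\mid B>x]\,\mathbb{P}[B>x] = \mathbb{E}(B\mathbf{1}_{B>x})$ converts this into the claimed expression. The main obstacle is the careful bookkeeping of the fill events: one must recognize that the IMM's order survives to be adversely selected only on the $f$-fraction of jumps where the informed trader wins the race, while the complementary $(1-f)$-fraction—where the IMM cancels first—produces no fill and hence contributes to neither the numerator nor the denominator.
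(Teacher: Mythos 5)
Your proof is correct and follows essentially the same route as the paper: the same decomposition of $G_{IMM}(x)$ over the aggressor identity conditional on a fill, the same conditional gains $G_{IMM}^{NT}(x)=x$ and $G_{IMM}^{IT}(x)=x-\mathbb{E}(B\mid B>x)$, the same fill probability $(1-r)\,\mathbb{P}[Q^{u}>L(x)]+rf\,\mathbb{P}[B>x]$, and the same identity $\mathbb{E}(B\mid B>x)\,\mathbb{P}[B>x]=\mathbb{E}\left(B\mathbf{1}_{B>x}\right)$ to reach the stated formula. Your closing remark about the race parameter $f$ entering only through the surviving fraction of jump-driven fills is precisely the key observation the paper itself highlights at the end of its proof.
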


\begin{proposition}\label{th:Gain_NMM}
The average profit of a new infinitesimal order if submitted at price level x by the NMM satisfies 
\begin{equation}
\label{eq:Gain_NMM}
G_{NMM}(x)=x-\frac{r \mathbb{E}\left(B \mathbf{1}_{B>x}\right)}{(1-r) \mathbb{P}\left[Q^{u}>L(x)\right]+r \mathbb{P}[B>x]}
\end{equation}
\end{proposition}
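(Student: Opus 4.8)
The plan is to mirror the conditioning argument behind Proposition~\ref{th:Gain_IMM}, the only structural change being that the NMM, unlike the IMM, receives the price-jump information only after the jump and therefore can never cancel in time; effectively this amounts to forcing the race parameter to $f=1$ in the branch where the order is hit by the informed trader. I would begin from the law of total expectation, decomposing the conditional gain according to the aggressor $\mathrm{agg}\in\{NT,IT\}$:
$$G_{NMM}(x)=G_{NMM}^{NT}(x)\,\mathbb{P}[\mathrm{agg}=NT\mid\text{filled}]+G_{NMM}^{IT}(x)\,\mathbb{P}[\mathrm{agg}=IT\mid\text{filled}].$$

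Next I would evaluate the two conditional gains. When the infinitesimal order at level $x$ is consumed by the noise trader the efficient price has not moved, so it sells a distance $x$ above the efficient price and $G_{NMM}^{NT}(x)=x$. When it is consumed by the informed trader there has necessarily been a jump $B>x$ (the IT sweeps all liquidity up to $P(t)+B$), the new efficient price is $P(t)+B$, and averaging the realized loss over the jump law gives $G_{NMM}^{IT}(x)=x-\mathbb{E}[B\mid B>x]$.

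The third step is to compute the unnormalized fill likelihoods of each branch. Writing $r$ for the probability that a market event is an informed jump and $1-r$ that it is a noise trade, a noise-trader fill requires $Q^{u}>L(x)$ and contributes $(1-r)\,\mathbb{P}[Q^{u}>L(x)]$; an informed-trader fill requires $B>x$ and, because the NMM cannot cancel, contributes the full $r\,\mathbb{P}[B>x]$ (this is precisely where the IMM's factor $f$ is replaced by $1$). Normalizing by their sum $D=(1-r)\mathbb{P}[Q^{u}>L(x)]+r\mathbb{P}[B>x]$ yields the two posterior probabilities via Bayes' rule.

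Finally I would substitute and simplify: since the two posteriors sum to one the terms $x\cdot\mathbb{P}[\mathrm{agg}=NT\mid\text{filled}]$ and $x\cdot\mathbb{P}[\mathrm{agg}=IT\mid\text{filled}]$ recombine into $x$, leaving $-\,r\,\mathbb{P}[B>x]\,\mathbb{E}[B\mid B>x]/D$, and rewriting $\mathbb{P}[B>x]\,\mathbb{E}[B\mid B>x]=\mathbb{E}[B\mathbf{1}_{B>x}]$ gives exactly \eqref{eq:Gain_NMM}. The one point requiring care, rather than the closing algebra, is justifying the factorization of the fill likelihoods — namely that the event type (jump versus noise trade) is independent of both the jump size $B$ and the noise volume $Q^{u}$, and that conditioning on ``filled, with no information on the initiator'' is exactly the Bayesian reweighting used above.
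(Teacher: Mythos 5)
Your proof is correct and follows essentially the same route as the paper: the paper proves the IMM case by conditioning on the aggressor and computing $\mathbb{P}(\text{Filled})$, then obtains the NMM case ``in a similar way'' via the remark that the NMM is the $f=1$ instance of the IMM formula, which is exactly the substitution you carry out explicitly. Your added care about the independence assumptions behind the Bayesian reweighting is a sound refinement of the same argument, not a different approach.
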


\begin{remark}[Case $f=1$]
\comment{Note that the average profit for the NMM \eqref{eq:Gain_NMM} corresponds to the case $f=1$ for the IMM in \eqref{eq:Gain_IMM}.
In this case (f = 1), there is no difference between the IMM and the NMM as the IMM can not profit from his knowledge to cancel his order and thus suffers from the same adverse selection than the NMM.}
\end{remark}

\begin{proof}\label{proof:Gain_IMM}
We provide the proof in the case of the informed market maker, the case of the noise market maker can be derived in a similar way or with the remark just above. We have:
\begin{alignat}{10}
    G_{IMM}(x)=G_{IMM}^{NT}(x) \mathbb{P}(agg=NT \mid \text {Filled})+G_{IMM}^{IT}(x) \mathbb{P}(agg=IT \mid \text {Filled}) \notag \\
    G_{IMM}^{NT}(x)=x \emph{ and } G_{IMM}^{IT}(x)=x-\mathbb{E}(B \mid B>x) \notag \\
    \implies G_{IMM}(x)=x-\mathbb{E}(B\mid B>x) \mathbb{P}(agg=IT \mid Filled) \notag \\
    \implies G_{IMM}(x)=x-\frac{r f \mathbb{E}\left(B 1_{B>x}\right)}{\mathbb{P}(\text {Filled})} \notag
\end{alignat}
\emph{where we used conditional probability formula and the definition of conditional expectation}.

\begin{alignat}{10}
    \mathbb{P}(\text {Filled})=r \mathbb{P}(\text {Filled} \mid a g g=I T)+(1-r) \mathbb{P}(\text {Filled} \mid agg=NT) \notag \\
    \implies \mathbb{P}(\text {Filled}) = rf\mathbb{P}(B>x)+(1-r) \mathbb{P}\left(Q^{u}>L(x)\right) \notag 
\end{alignat}
\emph{which ends the proof.}\\
The main point in this proof is to keep in mind the fact that the IMM order on back of the queue being filled means that either there is an efficient price jump and he did not succeed to cancel before the IT market order's execution or the NT sends an order of size greater than $L(x)$
\end{proof}

We now discuss how the IMM and the NMM build the LOB with respect to these gains. Market makers compute the value of the shape functions in order to break even on average (conditional on being filled now). To do this the IMM and the NMM compute the respective values of the LOB shape function in order to have $G_{IMM}(x) = 0$ and $G_{NMM}(x) = 0$. From \eqref{eq:Gain_IMM} and \eqref{eq:Gain_NMM} one can see that they can not for the same shape have their gains equal to zero. Hence the following theorem.

\begin{theorem}[Cumulative LOB shape]
  \label{th:cumulative_lob_shape}
  \begin{enumerate}
      \item The cumulative LOB shape for the noise market makers is smaller than the one of the informed market makers.
      \item The effective (visible) LOB shape is thus imposed by the informed market makers and is given by
      \begin{equation}
      \label{eq:cumulative_lob_shape}
L(x)=F_{u}^{-1}\left[(1-f)+\frac{f}{1-r}-\frac{f r}{1-r} \mathbb{E}\left(\max \left(\frac{B}{x}, 1\right)\right)\right]
\end{equation}
    \item The shape function is decreasing with respect to f and with respect to r; it is increasing with respect to x.
  \end{enumerate}
\end{theorem}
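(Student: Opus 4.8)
The plan is to handle the three parts in their natural logical order: solve each market maker's break-even condition in closed form, compare the two resulting shapes to identify which one is visible, and then read the monotonicity directly off the closed form. The only analytic ingredients needed are the definition $\mathbb{P}[Q^{u}>L(x)]=1-F_u(L(x))$, monotonicity of the generalized inverse $F_u^{-1}$, and the elementary identity $\mathbb{E}(\max(B,x))=x+\mathbb{E}((B-x)^{+})$.

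For part 2 I would set $G_{IMM}(x)=0$ in \eqref{eq:Gain_IMM} and clear the denominator,
\[
x\big[(1-r)\mathbb{P}[Q^{u}>L(x)]+fr\,\mathbb{P}[B>x]\big]=fr\,\mathbb{E}(B\mathbf{1}_{B>x}),
\]
then isolate the survival function of the noise volume,
\[
(1-r)\,\mathbb{P}[Q^{u}>L(x)]=\frac{fr}{x}\big(\mathbb{E}(B\mathbf{1}_{B>x})-x\,\mathbb{P}[B>x]\big)=\frac{fr}{x}\,\mathbb{E}\big((B-x)^{+}\big).
\]
Since $\mathbb{P}[Q^{u}>L(x)]=1-F_u(L(x))$, applying the non-decreasing inverse $F_u^{-1}$ yields $L(x)=F_u^{-1}\big[\,1-\tfrac{fr}{(1-r)x}\mathbb{E}((B-x)^{+})\,\big]$. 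The remaining step is purely algebraic: the identity $\mathbb{E}(\max(B/x,1))=1+\tfrac1x\mathbb{E}((B-x)^{+})$ together with $(1-f)+\tfrac{f}{1-r}-\tfrac{fr}{1-r}=1$ rewrites the bracket as $(1-f)+\tfrac{f}{1-r}-\tfrac{fr}{1-r}\mathbb{E}(\max(B/x,1))$, which is exactly \eqref{eq:cumulative_lob_shape}.

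For part 1 I would run the identical computation on $G_{NMM}(x)=0$, i.e. \eqref{eq:Gain_IMM} with $f$ replaced by $1$, to get $L_{NMM}(x)=F_u^{-1}\big[\,1-\tfrac{r}{(1-r)x}\mathbb{E}((B-x)^{+})\,\big]$. As $0\le f\le 1$ and $\mathbb{E}((B-x)^{+})\ge0$, the NMM bracket is pointwise no larger than the IMM one, so monotonicity of $F_u^{-1}$ gives $L_{NMM}(x)\le L_{IMM}(x)$, with strict inequality when $f<1$. To see that the total visible book is the IMM's rather than a superposition, I would use two facts about the profit at a \emph{fixed} total depth $L(x)$: first, with $a=(1-r)\mathbb{P}[Q^{u}>L(x)]$, $E=\mathbb{E}(B\mathbf{1}_{B>x})$, $p=\mathbb{P}[B>x]$, the map $c\mapsto \tfrac{crE}{a+crp}$ is increasing, hence $G_{IMM}(x)\ge G_{NMM}(x)$ everywhere; second, $G_{IMM}(x)$ is decreasing in the depth $L(x)$ because $\mathbb{P}[Q^{u}>L(x)]$ decreases. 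Thus as the queue fills the IMM is the last provider still breaking even: at the depth where $G_{IMM}=0$ one has $G_{NMM}<0$, so noise makers contribute nothing there and the effective cumulative shape is pinned at $L_{IMM}$.

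Finally, for part 3 I would use the compact form $H(f,r,x):=1+\tfrac{fr}{1-r}\big(1-M\big)$ with $M:=\mathbb{E}(\max(B/x,1))\ge1$, so that $L=F_u^{-1}(H)$. Since $1-M\le0$ and $\tfrac{fr}{1-r}\ge0$, a sign inspection gives $\partial_f H\le0$, while $\partial_r H\le0$ follows from $\tfrac{d}{dr}\tfrac{r}{1-r}=\tfrac{1}{(1-r)^2}>0$, and $H$ is increasing in $x$ because $\max(B/x,1)$ is non-increasing in $x$ pointwise, so $M$ is non-increasing; monotonicity of $F_u^{-1}$ transfers all three conclusions to $L$. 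The main obstacle is not the algebra but the equilibrium reasoning in parts 1--2: one must justify that the visible book equals the single break-even shape of the most aggressive provider, and I would make this rigorous precisely through the two monotonicity facts above. I would also record the standing hypotheses $r<1$ (finiteness of the formulas) and $H(f,r,x)\in[0,1]$ (so that $F_u^{-1}$ is applicable); the failure of the latter for small $x$ is exactly what forces $L(x)=0$ near the efficient price and thereby produces the nonzero spread.
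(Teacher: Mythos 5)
Your proof is correct and follows essentially the same route as the paper: invert each market maker's zero-profit condition into a closed form, apply the non-decreasing $F_{u}^{-1}$ to compare the two shapes, and obtain the monotonicity in $f$, $r$ and $x$ by sign inspection of the bracket (the paper does this via the identity $\mathbb{E}[\max(B/x,1)]=\tfrac{\mathbb{E}(B\mathbf{1}_{B>x})}{x}+\mathbb{P}(B<x)$ and the relation $1-F_{u}(L_{i}(x))=f\,(1-F_{u}(L_{u}(x)))$, which is equivalent to your direct bracket comparison). If anything, you supply more detail than the paper on point 2 --- where the paper writes only that it ``follows immediately'' --- since your two monotonicity facts ($G_{IMM}\ge G_{NMM}$ at any fixed total depth, and gains decreasing in depth because $\mathbb{P}[Q^{u}>L(x)]$ decreases) make rigorous why the equilibrium visible depth is pinned at the informed market makers' break-even shape.
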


\begin{proof}
Let's denote by $L_{i}(x)$ and $L_{u}(x)$ the optimal LOB shape for the IMM and the NMM (the LOB shape that allow them to break even on average).\\
From \eqref{eq:Gain_IMM} and \eqref{eq:Gain_NMM} it is straightforward to see that the break even condition correspond to $x=\frac{f r \mathbb{E}\left(B \mathbf{1}_{B>x}\right)}{(1-r) \mathbb{P}\left[Q^{u}>L_{i}(x)\right]+f r \mathbb{P}[B>x]}$ for the IMM and $x=\frac{ r \mathbb{E}\left(B \mathbf{1}_{B>x}\right)}{(1-r) \mathbb{P}\left[Q^{u}>L_{u}(x)\right]+ r \mathbb{P}[B>x]}$ for the NMM.\\
From the remark that $\mathbb{E}\left[\max \left(\frac{B}{x}, 1\right)\right]=\frac{\mathbb{E}\left(B 1_{B>x}\right)}{x}+\mathbb{P}(B<x)$, we then derive that:
  \begin{eqnarray}
  \label{eq:LOB_shape_informed}
    L_{i}(x)=F_{u}^{-1}\left[(1-f)+\frac{f}{1-r}-\frac{f r}{1-r} \mathbb{E}\left(\max \left(\frac{B}{x}, 1\right)\right)\right] \\
    \label{eq:LOB_shape_noise}
    L_{u}(x)=F_{u}^{-1}\left[\frac{1}{1-r}-\frac{r}{1-r} \mathbb{E}\left(\max \left(\frac{B}{x}, 1\right)\right)\right] 
  \end{eqnarray}
  Let's remark that $1-F_{u}\left(L_{i}(x)\right)=f\left(1-F_{u}\left(L_{u}(x)\right)\right)$ and as $0\leq f \leq 1$, we have $F_{u}\left(L_{i}(x)\right) \geq F_{u}\left(L_{u}(x)\right)$ and as $F_u$ is a non-decreasing function we have the first point of the theorem: 
  \begin{enumerate}
      \item $L_{i}(x) \geq L_{u}(x)$
      \item This point follows immediately from the precedent point.
      \item As $F_{u}^{-1}$ is a non decreasing function, it is sufficient to show that $$
h(f):=1-f+\frac{f}{1-r}-\frac{f r}{1-r} \mathbb{E}\left(\max \left(\frac{B}{x}, 1\right)\right)
$$ is a non-increasing function with respect to $f$. The derivatives of $h$ with respect to f is given by 
$$
h^{\prime}(f)=-1+\frac{1-r \mathbb{E}\left(\max \left(\frac{B}{x}, 1\right)\right)}{1-r}
$$ which is clearly non-positive.\\
The other statements can be derived in a similar way.
  \end{enumerate}
\end{proof}

Let's make some comments on this theorem. Each market maker computes its optimal LOB shape. When there is a possibility for a future profit, the market makers will add liquidity in the market until they are in their break even point with a vanishing gain. The interesting point is that while noise market makers will stop providing liquidity in order to avoid negative gains, informed one can still continue adding while having a non- negative gain when $f \neq = 1$. When the informed market makers become faster than the informed traders, that is when $f$ decreases the liquidity increases and and becomes eventually unbounded when $f \rightarrow 0$ (we recall that we are still in the case with no toxicity from noise traders).\\
We have seen that $x \rightarrow L(x)$ is an increasing function of x, we can derive the bid-ask spread based on this remark.

\begin{theorem}[Bid-ask spread]
  \label{th:bid_ask_spread_no_tick}
  \begin{enumerate}
      \item The cumulative LOB shape satisfies $L(x)=0$ for $0 \leq x \leq \phi$ and for $x > \phi$ it is increasing where $\phi$ is the unique solution of the following equation:
      \begin{equation}
      \label{eq:bid_ask_spread_no_tick}
\mathbb{E}\left[\max \left(\frac{B}{\phi}, 1\right)\right]=1+\frac{1}{2 f}\left(\frac{1}{r}-1\right)
\end{equation}
The bid-ask spread is equal to $2 \phi$.
      \item the bid-ask spread is increasing with respect to $f$ and to $r$.
  \end{enumerate}
\end{theorem}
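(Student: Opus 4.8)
The plan is to read the half-spread off directly from the cumulative shape \eqref{eq:cumulative_lob_shape}. Writing $L(x)=F_u^{-1}(h(x))$ with
\[
h(x)=(1-f)+\frac{f}{1-r}-\frac{fr}{1-r}\,\mathbb{E}\!\left[\max\!\left(\tfrac{B}{x},1\right)\right],
\]
I would first note that $x\mapsto\mathbb{E}[\max(B/x,1)]$ is non-increasing, and strictly decreasing wherever $\mathbb{P}(B>x)>0$; since its coefficient $-fr/(1-r)$ is negative, $h$ is continuous and strictly increasing in $x$ (this also re-derives the monotonicity of $L$ from Theorem \ref{th:cumulative_lob_shape}). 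The best ask is the first level carrying liquidity, i.e.\ the smallest $x$ at which $L$ turns positive. Because a resting ask is consumed by the noise trader only when the latter happens to buy, the argument at which $F_u^{-1}$ returns the vanishing volume is the centred level $F_u(0)=\tfrac12$; hence the half-spread $\phi$ is characterised by $h(\phi)=\tfrac12$.

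First I would solve $h(\phi)=\tfrac12$ for $\mathbb{E}[\max(B/\phi,1)]$: multiplying by $(1-r)/(fr)$ and collecting the constants $(1-f)+f/(1-r)-\tfrac12$ yields, after simplification, exactly \eqref{eq:bid_ask_spread_no_tick}, the $\tfrac12$ being precisely what turns into the factor $1/(2f)$. Uniqueness of $\phi$ follows from strict monotonicity of $h$: as $x\to 0^{+}$ one has $\mathbb{E}[\max(B/x,1)]\to\infty$ so $h(x)\to-\infty$, whereas as $x\to\infty$ it tends to $1$ and $h(x)\to(1-f)+f=1>\tfrac12$, so by continuity $h$ crosses $\tfrac12$ exactly once. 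For $x<\phi$ we have $h(x)<\tfrac12$, hence $F_u^{-1}(h(x))\le F_u^{-1}(\tfrac12)=0$ and the visible shape is $0$, while for $x>\phi$ it is positive and increasing; the spread is $2\phi$ by the bid/ask symmetry around the efficient price.

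For the second statement I would argue by comparison rather than by grinding derivatives. Set $\Psi(f,r)=1+\tfrac{1}{2f}(\tfrac1r-1)$, the right-hand side of \eqref{eq:bid_ask_spread_no_tick}. Since $0\le r<1$ gives $\tfrac1r-1>0$, $\Psi$ is strictly decreasing in $f$, and since $\tfrac1r-1$ is decreasing in $r$, $\Psi$ is strictly decreasing in $r$; meanwhile the left-hand side $\mathbb{E}[\max(B/\phi,1)]$ is strictly decreasing in $\phi$. Increasing $f$ (or $r$) therefore lowers $\Psi$, and to restore equality $\phi$ must rise; equivalently, differentiating the identity $\mathbb{E}[\max(B/\phi,1)]=\Psi(f,r)$ gives $\mathrm{d}\phi/\mathrm{d}f=\partial_f\Psi/\partial_\phi(\mathrm{LHS})>0$, both factors being negative, and likewise for $r$. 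Hence $\phi$, and so the spread $2\phi$, increases in both $f$ and $r$.

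The main obstacle is conceptual, not computational: pinning down the vanishing level of $F_u^{-1}$, i.e.\ justifying $h(\phi)=\tfrac12$ rather than $h(\phi)=0$. This hinges on the fact that a limit order on the ask is filled by the noise trader only when that trader is a buyer, an event of probability $\tfrac12$, so the relevant depth distribution is effectively centred and the cumulative shape vanishes at the median $F_u(0)=\tfrac12$. Once this threshold is fixed, the displayed equation, the uniqueness of $\phi$, and the monotonicity of the spread all follow from the routine steps above.
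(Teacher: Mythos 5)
Your proof is correct and follows essentially the same route as the paper's: you characterize the half-spread $\phi$ by the vanishing of the informed market makers' shape, using $F_u^{-1}(\tfrac12)=0$ (the noise trader's signed volume having median zero) to get $h(\phi)=\tfrac12$, rearrange to obtain \eqref{eq:bid_ask_spread_no_tick}, and set $L(x)=0$ below $\phi$ since the break-even shape would be negative there. The only difference is that you spell out what the paper calls ``straightforward'' --- the existence/uniqueness of $\phi$ via the limits $h(x)\to-\infty$ as $x\to 0^{+}$ and $h(x)\to 1$ as $x\to\infty$, and the comparative-statics argument for the monotonicity of the spread in $f$ and $r$ --- which is a welcome tightening rather than a departure.
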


\begin{proof}
We have already seen that the cumulative LOB shape is an increasing function of x, and as
market makers add volumes in order to break even on average. The condition to add volumes for the
informed market makers is then $L_{i}(x) > 0$ and for the noise market makers $L_{u}(x) > 0$.\\
As already noticed, informed market makers have the possibility to add volumes after the break even
condition of noise market makers and so determine the effective LOB shape. As $F_{u}^{-1}$ is increasing and
$F_{u}^{-1}(\frac{1}{2})=0$, we deduce the existence of $\phi$ such that $L_{i}(\phi) = 0$ with $\phi$ verifying the implicit equation \begin{equation}
\label{eq:phi_spread}
(1-f)+\frac{f}{1-r}-\frac{f r}{1-r} \mathbb{E}\left(\max \left(\frac{B}{\phi}, 1\right)\right)=\frac{1}{2}
\end{equation}
and for $x > \phi$ $L_{i}(x) > 0$. For $0 \leq x < \phi$ $L_i$ should be negative which means that market makers do not provide liquidity at these level prices and thus $L_{i}(x) = 0$ for these points. $\phi$ is thus the half bid-ask spread. It is straightforward to derive \ref{eq:bid_ask_spread_no_tick} from \ref{eq:phi_spread}.\\
The second point is straightforward from the equation solved by the half bid-ask spread.
\end{proof}

\begin{remark}
 Actually the theorem tells us that the bid-ask spread is fixed by the informed market makers as the spread for noise market makers (case $f=1$) denoted by $\mu$ solves $\mathbb{E}\left[\max \left(\frac{B}{\mu}, 1\right)\right]=1+\frac{1}{2}\left(\frac{1}{r}-1\right)$ and consequently $\phi < \mu$. \emph{The meaning is clear: after noise market makers stop adding volumes and fix their spread, the informed
market makers still continue to add volumes near the efficient price. We remark (as expected) that the spread does not depend on noise traders distributions as we have assumed that they do not have any toxicity for market makers. We recall that we do not make any assumption on inventory costs (or any kind of costs due to liquidity provision) that would let emerge the spread. The spread appears as a purely mechanism to balance asymmetry of information between market makers and informed traders.}
 
\end{remark}

\subsubsection{The non-zero tick size case}
\label{subsubsec:ticksize}
In this section we study the effect of introducing a tick size in the same way as \protect\cite{huang2019glosten}. We suppose that we have the same efficient price than in the case without tick size. The cumulative LOB shape becomes now a piece-wise constant function as liquidity can be add only on the prices in the tick grid. Let's define $d = P_{alpha}(t) - P(t)$ where we define $P_{alpha}(t)$ as the smallest admissible price in the tick grid that is greater or equal to the current efficient price $P(t)$.

We now consider the cumulative LOB shape $L(x)$ as defined on the tick grid by $L^{d}(i)$ where \(i \in \mathbb{N}^{*}\) is the $i^{th}$ closest price to $P(t)$ in the ask book (we consider as in the previous section the ask side of the book, the derivation for the bid size is straightforward). We define $$
L^{d}(i)=L(d+(i-1) \alpha)
$$ 
where $\alpha$ is the tick size. The quantity placed at the $i^{th}$ limit is defined by $$
l^{d}(i)=L^{d}(i)-L^{d}(i-1)
$$
We still assume that as soon as there is an efficient price jump, the informed trader and the informed market maker make a race to benefit from this information. We keep the same parameter $f$.
The informed trader sends a marketable limit order of size $Q^{i} = L^{d}(i)$ if $B \in[d+(i-1) \alpha, d+i \alpha]$ and the informed market maker sends an order to cancel all his waiting orders at this limit.

The computation of the informed market makers and noise market makers expected gains conditional on being filled now is quasi-exactly the same than in the case without tick size. We denote by $G_{IMM}^{d}(i)$ and $G_{NMM}^{d}(i)$ the conditional gains of a new infinitesimal passive order placed at the $i^{th}$ level for the IMM and the NMM.

\begin{proposition}\label{th:Gain_IMM_tick}
The average gain of a new infinitesimal order submitted by the informed market maker at level price $i$ conditional on being filled now is
\begin{equation}
G_{I M M}^{d}(i)=G_{I M M}(d+(i-1) \alpha)=d+(i-1) \alpha-\frac{f r \mathbb{E}\left(B 1_{B>d+(i-1) \alpha}\right)}{(1-r) \mathbb{P}\left[Q^{u}>L^{d}(i)\right]+f r \mathbb{P}[B>d+(i-1) \alpha]}
\end{equation}
\end{proposition}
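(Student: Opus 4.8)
The plan is to reproduce the argument of Proposition~\ref{th:Gain_IMM} almost verbatim, the only new ingredient being that the continuous price level $x$ is now pinned to the tick grid. Writing $x_i := d+(i-1)\alpha$ for the distance from the current efficient price to the $i$-th admissible price, I would first decompose the conditional gain according to the identity of the aggressor that fills the resting order:
$$
G_{IMM}^{d}(i) = G_{IMM}^{d,NT}(i)\,\mathbb{P}(agg=NT\mid\text{Filled}) + G_{IMM}^{d,IT}(i)\,\mathbb{P}(agg=IT\mid\text{Filled}).
$$
Exactly as in the no-tick case, an order filled by the noise trader is sold at its quoted price with no adverse move, so $G_{IMM}^{d,NT}(i)=x_i$, whereas an order filled by the informed trader is consumed only when the jump overshoots its level, giving $G_{IMM}^{d,IT}(i)=x_i-\mathbb{E}(B\mid B>x_i)$.

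The substantive step is to recompute the fill probability on the grid. Here I would invoke the discretized execution rule: when the jump lands in the bucket $B\in[d+(i-1)\alpha,\,d+i\alpha]$, or in any higher bucket, the informed trader sweeps a marketable limit order of size $L^{d}(i)$ or larger, clearing at least the first $i$ levels. By the telescoping structure $l^{d}(i)=L^{d}(i)-L^{d}(i-1)$ of the posted quantities, the order resting at level $i$ is therefore hit by the informed trader precisely on the event $\{B>d+(i-1)\alpha\}=\{B>x_i\}$, whose probability is the same functional of $x_i$ as in the continuous case. The noise-trader fill condition is unchanged in form, $Q^{u}>L^{d}(i)$. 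Conditioning on whether the event is a price jump (probability $r$) or a noise trade (probability $1-r$), and using that the informed market maker fails to cancel before the informed trader with probability $f$, I obtain $\mathbb{P}(\text{Filled})=(1-r)\,\mathbb{P}(Q^{u}>L^{d}(i))+rf\,\mathbb{P}(B>x_i)$.

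Finally I would assemble the pieces as before: Bayes' rule gives $\mathbb{P}(agg=IT\mid\text{Filled})=rf\,\mathbb{P}(B>x_i)/\mathbb{P}(\text{Filled})$, and combining this with the identity $\mathbb{E}(B\mid B>x_i)\,\mathbb{P}(B>x_i)=\mathbb{E}(B\mathbf{1}_{B>x_i})$ collapses the two terms to
$$
G_{IMM}^{d}(i)=x_i-\frac{rf\,\mathbb{E}(B\mathbf{1}_{B>x_i})}{(1-r)\,\mathbb{P}(Q^{u}>L^{d}(i))+rf\,\mathbb{P}(B>x_i)},
$$
which is the stated formula once $x_i=d+(i-1)\alpha$ is substituted and one recognizes this as $G_{IMM}(d+(i-1)\alpha)$ with the cumulative shape read off the grid. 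The only genuine obstacle is the middle paragraph: one must check that bucketing the jump into tick-wide intervals does not alter the event on which the $i$-th resting order is consumed by the informed trader. Once the cumulative (telescoping) structure of $L^{d}$ is used, this reduces to the continuous event $\{B>x_i\}$, after which the remaining Bayesian bookkeeping is identical to that already carried out in Proposition~\ref{th:Gain_IMM}.
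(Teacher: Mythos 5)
Your proof is correct and follows essentially the same route as the paper, which simply asserts that the computation is ``quasi-exactly the same'' as in the zero-tick case (Proposition~\ref{th:Gain_IMM}) and leaves the details to the reader. Your middle paragraph in fact supplies the one detail the paper glosses over --- verifying that under the bucketed execution rule ($Q^i=L^d(j)$ when $B\in[d+(j-1)\alpha,d+j\alpha]$) the fill event for an infinitesimal order at level $i$ is still $\{B>d+(i-1)\alpha\}$ --- after which the Bayesian bookkeeping is identical.
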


\begin{proposition}\label{th:Gain_NMM_tick}
The average gain of a new infinitesimal order submitted by the noise market maker at level price $i$ conditional on being filled now is
\begin{equation}
G_{N M M}^{d}(i)=G_{N M M}(d+(i-1) \alpha)=d+(i-1) \alpha-\frac{ r \mathbb{E}\left(B 1_{B>d+(i-1) \alpha}\right)}{(1-r) \mathbb{P}\left[Q^{u}>L^{d}(i)\right]+ r \mathbb{P}[B>d+(i-1) \alpha]}
\end{equation}
\end{proposition}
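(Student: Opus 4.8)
The plan is to transcribe the argument used for Proposition~\ref{th:Gain_NMM} onto the discretized price grid, the only genuinely new point being a careful accounting of which jump sizes cause the level-$i$ order to be swept. Write $x_i := d+(i-1)\alpha$ for the distance from the current efficient price to the $i$-th admissible limit on the ask side. First I would set up the same conditional decomposition as in the proof of Proposition~\ref{th:Gain_IMM}, namely
\begin{equation*}
G_{NMM}^{d}(i)=G_{NMM}^{NT}(i)\,\mathbb{P}(agg=NT\mid\text{Filled})+G_{NMM}^{IT}(i)\,\mathbb{P}(agg=IT\mid\text{Filled}),
\end{equation*}
and identify the two conditional gains. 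When the order is consumed by the noise trader there is no adverse selection, so the noise market maker simply collects the premium $G_{NMM}^{NT}(i)=x_i$; when it is consumed by the informed trader the efficient price has already jumped to $P(t)+B$, so conditional on such a fill the gain is $G_{NMM}^{IT}(i)=x_i-\mathbb{E}(B\mid B>x_i)$.

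Next I would compute the fill probabilities. The crucial structural difference between the NMM and the IMM is that the noise market maker receives the jump information too late to cancel, so, unlike the IMM, there is no factor $f$ discounting the informed fill: every time the jump reaches at least the $i$-th limit, the level-$i$ order is hit. Although the informed trader sends a marketable limit order of size exactly $L^{d}(i)$ only when $B\in[x_i,x_i+\alpha]$, the level-$i$ order is nonetheless swept for every larger jump as well, so summing over all admissible jump buckets from the $i$-th upward shows that the order is filled by the informed trader precisely on the event $\{B>x_i\}$. Conditioning on the aggressor type then gives
\begin{equation*}
\mathbb{P}(\text{Filled})=r\,\mathbb{P}(B>x_i)+(1-r)\,\mathbb{P}\!\left(Q^{u}>L^{d}(i)\right),
\end{equation*}
which is exactly the continuous-case expression with $L(x)$ replaced by its grid value $L^{d}(i)$.

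It then remains to combine the pieces via Bayes' rule. Using $\mathbb{P}(agg=IT\cap\text{Filled})=r\,\mathbb{P}(B>x_i)$ together with the identity $\mathbb{E}(B\mid B>x_i)\,\mathbb{P}(B>x_i)=\mathbb{E}\!\left(B\mathbf{1}_{B>x_i}\right)$, substituting back into the decomposition reproduces the claimed formula, which is literally $G_{NMM}(x_i)$; hence $G_{NMM}^{d}(i)=G_{NMM}(d+(i-1)\alpha)$. I would also remark that the same result follows even more quickly by setting $f=1$ in Proposition~\ref{th:Gain_IMM_tick}, in accordance with the earlier observation that the noise market maker behaves like an informed market maker deprived of his cancellation advantage.

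The main obstacle is the fill-probability bookkeeping rather than any analytic difficulty: one must verify that aggregating the informed trader's size-$L^{d}(i)$ orders across all jump buckets reaching level $i$ or beyond reproduces exactly $\mathbb{P}(B>x_i)$, with no double counting and no ambiguity at the bucket endpoints $\{B=x_i+k\alpha\}$. Since $B$ is assumed square-integrable and (as in the no-tick derivation) treated as having a continuous law, these endpoints carry zero probability, so the half-open versus closed convention on the intervals is immaterial. Everything else is a verbatim transcription of the computation already carried out for Proposition~\ref{th:Gain_NMM}.
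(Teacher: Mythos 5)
Your proposal is correct and follows exactly the route the paper intends: the paper gives no separate proof for the tick-size propositions, stating only that the computation is ``quasi-exactly the same'' as the zero-tick case, and your argument is precisely that transcription — the same aggressor-type decomposition, the same conditional gains $x_i$ and $x_i-\mathbb{E}(B\mid B>x_i)$, and the same fill probability with $L(x)$ replaced by $L^{d}(i)$, supplemented by the (also paper-sanctioned) shortcut of setting $f=1$ in Proposition~\ref{th:Gain_IMM_tick}. Your careful bookkeeping that the level-$i$ order is swept on the whole event $\{B>x_i\}$ (not just on the bucket $[x_i,x_i+\alpha]$) makes explicit a point the paper leaves implicit, and it is the right justification for the absence of the factor $f$ in the NMM formula.
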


\begin{theorem}[Cumulative LOB shape with tick size]
  \label{th:cumulative_lob_shape_tick_size}
  \begin{enumerate}
      \item The cumulative LOB shape for the noise market makers is smaller than the one of the informed market makers.
      \item The effective (visible) LOB shape is thus imposed by the informed market makers and is given by
      \begin{equation}
L^{d}(i)=F_{u}^{-1}\left[(1-f)+\frac{f}{1-r}-\frac{f r}{1-r} \mathbb{E}\left(\max \left(\frac{B}{d+(i-1) \alpha}, 1\right)\right)\right]
\end{equation}
    \item The shape function is decreasing with respect to f and with respect to r (all things being equal)
  \end{enumerate}
\end{theorem}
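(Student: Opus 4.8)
The plan is to treat this statement as the grid-restricted transcription of Theorem~\ref{th:cumulative_lob_shape}, since Propositions~\ref{th:Gain_IMM_tick} and~\ref{th:Gain_NMM_tick} already assert that the conditional gains at the $i$-th level coincide with the continuous gains $G_{IMM}$ and $G_{NMM}$ evaluated at the grid price $x = d + (i-1)\alpha$, with $L^d(i)$ replacing $L(x)$. First I would impose the two break-even conditions $G_{IMM}^d(i) = 0$ and $G_{NMM}^d(i) = 0$, which, exactly as in the no-tick case, define the optimal shapes $L_i^d(i)$ and $L_u^d(i)$ of the informed and noise market makers. Because $L^d(i)$ enters each gain only linearly through the single term $\mathbb{P}[Q^u > L^d(i)]$, these are not genuinely implicit equations: solving $G_{IMM}^d(i)=0$ for that probability yields a closed form depending on $x = d+(i-1)\alpha$ alone.

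The second step is the inversion. Using the elementary identity $\mathbb{E}[\max(B/x,1)] = \mathbb{E}(B\mathbf{1}_{B>x})/x + \mathbb{P}(B < x)$ with $x = d+(i-1)\alpha$, the break-even relation collapses to $1 - F_u(L_i^d(i)) = \frac{fr}{1-r}\bigl(\mathbb{E}[\max(B/x,1)] - 1\bigr)$, and applying $F_u^{-1}$ (non-decreasing) gives the announced formula after the rewriting $1 + \frac{fr}{1-r} = (1-f) + \frac{f}{1-r}$. The identical computation with $f=1$ produces $L_u^d(i)$. Part~1 then follows from the resulting relation $1 - F_u(L_i^d(i)) = f\,\bigl(1 - F_u(L_u^d(i))\bigr)$: since $0 \le f \le 1$ this forces $F_u(L_i^d(i)) \ge F_u(L_u^d(i))$, hence $L_i^d(i) \ge L_u^d(i)$ by monotonicity of $F_u$, and Part~2 is immediate because the larger (informed) shape is the one actually posted and visible.

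For Part~3 I would argue, as in the continuous theorem, that since $F_u^{-1}$ is non-decreasing it suffices to show the bracketed argument $h(f,r) := (1-f) + \frac{f}{1-r} - \frac{fr}{1-r}\mathbb{E}[\max(B/x,1)]$ is non-increasing in each variable separately, with the other held fixed. The $f$-derivative is the constant $-1 + \frac{1 - r\,\mathbb{E}[\max(B/x,1)]}{1-r}$, which is non-positive precisely because $\mathbb{E}[\max(B/x,1)] \ge 1$; the $r$-derivative requires the quotient rule and reduces to $f\,\frac{1 - \mathbb{E}[\max(B/x,1)]}{(1-r)^2} \le 0$, again using $\mathbb{E}[\max(B/x,1)] \ge 1$. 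The main obstacle is not the algebra, which is verbatim the no-tick proof with $x = d+(i-1)\alpha$ frozen at each level; it is justifying the input from Propositions~\ref{th:Gain_IMM_tick} and~\ref{th:Gain_NMM_tick}, namely that on the discrete grid the order at level $i$ is filled by the informed trader exactly on the event $\{B > d + (i-1)\alpha\}$ (the greedy marketable limit order reaching level $i$), so that the filled-probability retains the identical structure $(1-r)\mathbb{P}[Q^u > L^d(i)] + fr\,\mathbb{P}[B > d+(i-1)\alpha]$. Once this event identification is in hand, the three claims follow pointwise in $i$ with no new difficulty.
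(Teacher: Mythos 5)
Your proposal is correct and follows essentially the same route as the paper, which simply states that the proof with tick size is identical to the no-tick case: plug the grid price $x = d+(i-1)\alpha$ into the break-even conditions from Propositions~\ref{th:Gain_IMM_tick} and~\ref{th:Gain_NMM_tick}, invert $F_u$, and reuse the monotonicity arguments of Theorem~\ref{th:cumulative_lob_shape}. Your write-up is in fact more explicit than the paper's (notably the $r$-derivative computation and the identification of the fill event $\{B > d+(i-1)\alpha\}$, which the paper leaves implicit in its model assumptions), but the underlying argument is the same.
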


The proof is identical to the case without tick size as we still assume market makers add orders to break even on average. The comments for theorem \eqref{th:cumulative_lob_shape} still hold. In a word, when noise market makers stop providing liquidity (otherwise their gain becomes negative), the informed market makers can still continue to provide liquidity while having a non-negative gain. When the informed market makers become infinitely faster than the informed traders, the liquidity becomes unbounded. In any case an interesting point to notice from this framework is that the cumulative LOB shape is strictly increasing and goes to infinity with $i$. So the book is never empty and the noise trader can always find volume to trade with. This is due to the smoothness hypotheses we made on the noise traders volume and on the efficient price distributions.

Now our goal is to derive the spread when the tick size is not zero.

\begin{theorem}
  \begin{enumerate}
      \item The LOB shape satisifies $l^{d}(i) = 0$ for $0 < i < k_d$ where $k_d$ is determined by the following equation:
      \begin{equation}
      \label{eq:kd}
k_{d}=1+\lceil\frac{\phi-d}{\alpha}\rceil
\end{equation}
where $\lceil x \rceil$ denotes the smallest integer that is larger than $x$.
  \end{enumerate}
  \item The bid ask spread is given by
  \begin{equation}
  \label{eq:spread_tick}
\phi_{\alpha}^{d}=\alpha\left(\lceil\frac{\phi-d}{\alpha}\rceil+\lceil\frac{\phi+d}{\alpha}\rceil\right)
\end{equation}
\end{theorem}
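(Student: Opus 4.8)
The plan is to reduce everything to the continuous (tick-free) case already settled in Theorem~\ref{th:bid_ask_spread_no_tick}. There I would invoke the two key facts established in that theorem: $L(x)=0$ for $0\le x\le\phi$, and $L(x)>0$ (strictly increasing) for $x>\phi$, where $\phi$ is the half-spread solving \eqref{eq:bid_ask_spread_no_tick}. Since the tick-grid cumulative shape is defined by $L^{d}(i)=L(d+(i-1)\alpha)$, the whole question of which ticks carry liquidity becomes the purely arithmetic question of which grid distances $d+(i-1)\alpha$ exceed the continuous threshold $\phi$.

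For the first claim I would observe that $L^{d}(i)>0$ holds exactly when $d+(i-1)\alpha>\phi$, i.e. $i-1>(\phi-d)/\alpha$. With the stated convention that $\lceil\cdot\rceil$ is the smallest integer strictly larger than its argument, this is equivalent to $i-1\ge\lceil(\phi-d)/\alpha\rceil$, that is $i\ge k_{d}:=1+\lceil(\phi-d)/\alpha\rceil$. Consequently $L^{d}(i)=0$ for every $i<k_{d}$, and since $l^{d}(i)=L^{d}(i)-L^{d}(i-1)$ is a difference of two vanishing terms throughout that range, $l^{d}(i)=0$ for $0<i<k_{d}$. The level $i=k_{d}$ is then the best ask, sitting at distance $d+(k_{d}-1)\alpha=d+\alpha\lceil(\phi-d)/\alpha\rceil$ from $P(t)$.

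For the spread I would repeat the argument on the bid side, where the grid is offset differently: the closest admissible price below $P(t)$ lies at distance $\alpha-d$, so the $j$-th bid level sits at distance $(\alpha-d)+(j-1)\alpha$ from $P(t)$. By the symmetric break-even condition the bid half-shape is again governed by the same $\phi$, so the first bid level carrying liquidity is the smallest $j$ with $(\alpha-d)+(j-1)\alpha>\phi$, namely $j^{\star}=\lceil(\phi+d)/\alpha\rceil$, located at distance $\alpha\lceil(\phi+d)/\alpha\rceil-d$ from $P(t)$. Adding the best-ask and best-bid distances, the $\pm d$ terms cancel and I obtain $\phi_{\alpha}^{d}=\alpha\bigl(\lceil(\phi-d)/\alpha\rceil+\lceil(\phi+d)/\alpha\rceil\bigr)$, as claimed.

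The only genuinely delicate point is the bookkeeping of the grid offset: because $P(t)$ falls a distance $d$ below the nearest tick above it but a distance $\alpha-d$ above the nearest tick below it, the ask and bid sides are \emph{not} symmetric, and this asymmetry is exactly what produces the two distinct ceilings $\lceil(\phi-d)/\alpha\rceil$ and $\lceil(\phi+d)/\alpha\rceil$. I would also be careful with the strict-versus-weak ceiling convention at the boundary where $\phi\mp d$ is an exact multiple of $\alpha$: since liquidity is strictly positive only for $x>\phi$ and \emph{not} at $x=\phi$, the strict ceiling is the correct choice, and using it consistently on both sides is what keeps the two parts of the theorem coherent.
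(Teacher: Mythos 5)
Your proof is correct and takes essentially the same route as the paper: reduce to the continuous-case threshold $\phi$ from Theorem~\ref{th:bid_ask_spread_no_tick}, identify $k_d$ as the smallest grid index $i$ with $d+(i-1)\alpha>\phi$, and obtain the spread by treating the bid side symmetrically. Your write-up is in fact more complete than the paper's one-line argument, since you make explicit the bid-side grid offset $\alpha-d$ and the strict-ceiling convention at the boundary, both of which the paper compresses into ``the other results follow by a symmetry argument for the bid side.''
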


\begin{proof}
We showed in the case the tick size is zero that there exists $\phi$ such that for $x < \phi$ $L(x)=0$ and for $x> \phi$ $L(x) > 0$. This result remains true in the discrete LOB for $k_d$ where $k_{d}=\min \{k \in \mathbb{N}^{*} \mid d+(k-1) \alpha>\phi\}$, so we have \eqref{eq:kd}. The other results follow by a symmetry argument for the bid side.
\end{proof}

\subsection{Adding toxicity from noise traders (the case $\theta \neq 0$)}
\label{subsec:toxicity}
We come back to the general case with a non zero constant $\theta$. we consider there is no tick size as the
extension can be made in a very similar way than previously. we still consider the ask side of the book.

\begin{proposition}\label{th:Gain_IMM_toxicity}
The average profit of a new infinitesimal order if submitted at price level x by the IMM satisfies
\begin{equation}
\label{eq:Gain_IMM_toxicity}
G_{I M M}(x)=x-\Theta+\left(\Theta-\frac{\mathbb{E}\left(B 1_{B>x}\right)}{\mathbb{P}(B>x)}\right) \frac{r f \mathbb{P}(B>x)}{(1-r) \mathbb{P}\left(Q^{u}>L(x)\right)+r f \mathbb{P}(B>x)}
\end{equation} where \(\Theta:=\theta \mathbb{E}\left(X_{j}-\rho X_{j-1} \mid X_{j}=1\right)\)
\end{proposition}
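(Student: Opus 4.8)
The plan is to follow the same decomposition as in the proof of Proposition \ref{th:Gain_IMM}, conditioning the gain of the infinitesimal order on the identity of the aggressor that fills it. Writing $\mathrm{agg}$ for this aggressor (either $IT$ or $NT$), the law of total expectation gives
$$G_{IMM}(x) = G_{IMM}^{NT}(x)\,\mathbb{P}(\mathrm{agg}=NT \mid \text{Filled}) + G_{IMM}^{IT}(x)\,\mathbb{P}(\mathrm{agg}=IT \mid \text{Filled}),$$
and since the two conditional probabilities sum to one I would rewrite this as
$$G_{IMM}(x) = G_{IMM}^{NT}(x) + \bigl(G_{IMM}^{IT}(x) - G_{IMM}^{NT}(x)\bigr)\,\mathbb{P}(\mathrm{agg}=IT \mid \text{Filled}).$$
This already isolates the structure of the target formula, so the work reduces to (i) recomputing the two conditional gains in the presence of noise-trader impact and (ii) evaluating $\mathbb{P}(\mathrm{agg}=IT \mid \text{Filled})$.

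The only genuinely new step, and the one I expect to be the main subtlety, is the conditional gain against the noise trader. On the ask side the passive order posted at $P(t)+x$ is hit only by a buy, i.e.\ by an event with $X_j = 1$; unlike the $\theta=0$ case the efficient price now jumps by $\theta(X_j - \rho X_{j-1})$ at that trade. Measuring the realized gain of the filled order against the post-trade efficient value, the market maker sells at $P(t)+x$ while the fair value has become $P(t)+\theta(X_j-\rho X_{j-1})$, so conditioning on the fill (which forces $X_j=1$) yields $G_{IMM}^{NT}(x) = x - \Theta$ with $\Theta = \theta\,\mathbb{E}(X_j - \rho X_{j-1} \mid X_j = 1)$. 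The gain against the informed trader is unchanged from Proposition \ref{th:Gain_IMM}, namely $G_{IMM}^{IT}(x) = x - \mathbb{E}(B \mid B>x) = x - \mathbb{E}(B\mathbf{1}_{B>x})/\mathbb{P}(B>x)$, since the informed trader's marketable order and the jump distribution are untouched by $\theta$. Substituting, the bracketed difference becomes $G_{IMM}^{IT}(x) - G_{IMM}^{NT}(x) = \Theta - \mathbb{E}(B\mathbf{1}_{B>x})/\mathbb{P}(B>x)$.

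Finally I would compute the filling probability exactly as in the $\theta=0$ case. The order is filled by the informed trader only when his order is processed first (probability $f$) and the jump exceeds $x$, so $\mathbb{P}(\text{Filled}\mid \mathrm{agg}=IT) = f\,\mathbb{P}(B>x)$; it is filled by the noise trader when the noise volume exceeds the standing liquidity, $\mathbb{P}(\text{Filled}\mid \mathrm{agg}=NT) = \mathbb{P}(Q^u > L(x))$. With priors $\mathbb{P}(\mathrm{agg}=IT)=r$ and $\mathbb{P}(\mathrm{agg}=NT)=1-r$, Bayes' rule gives
$$\mathbb{P}(\mathrm{agg}=IT \mid \text{Filled}) = \frac{r f\,\mathbb{P}(B>x)}{(1-r)\,\mathbb{P}(Q^u>L(x)) + r f\,\mathbb{P}(B>x)}.$$
Inserting this together with the two conditional gains into the rewritten decomposition reproduces \eqref{eq:Gain_IMM_toxicity}. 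The main obstacle is conceptual rather than computational: correctly attributing the impact term $\Theta$ to the noise-trader fill and recognizing that conditioning on an ask-side fill forces $X_j=1$, after which the algebra is identical to the $\theta=0$ derivation.
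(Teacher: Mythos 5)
Your proposal is correct and follows essentially the same route as the paper's own proof: the same decomposition of $G_{IMM}(x)$ over the aggressor's identity, the same conditional gains $G_{IMM}^{NT}(x)=x-\Theta$ (with the fill forcing $X_j=1$) and $G_{IMM}^{IT}(x)=x-\mathbb{E}(B\mid B>x)$, and the same fill-probability computation carried over unchanged from the $\theta=0$ case of Proposition~\ref{th:Gain_IMM}. The only difference is cosmetic --- you rearrange the decomposition to expose the target formula's structure before substituting, whereas the paper substitutes directly and cites the earlier proof for the Bayes step.
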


\begin{proposition}\label{th:Gain_NMM_toxicity}
The average profit of a new infinitesimal order if submitted at price level x by the NMM satisfies
\begin{equation}
G_{N M M}(x)=x-\Theta+\left(\Theta-\frac{\mathbb{E}\left(B 1_{B>x}\right)}{\mathbb{P}(B>x)}\right) \frac{r \mathbb{P}(B>x)}{(1-r) \mathbb{P}\left(Q^{u}>L(x)\right)+r \mathbb{P}(B>x)}
\end{equation} where \(\Theta:=\theta \mathbb{E}\left(X_{j}-\rho X_{j-1} \mid X_{j}=1\right)\)
\end{proposition}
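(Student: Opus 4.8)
The plan is to mirror the conditional decomposition already used in the proof of Proposition~\ref{th:Gain_IMM}, the only substantive change being the profit booked when the order is consumed by the noise trader. First I would condition on the identity of the counterparty,
\[
G_{NMM}(x) = G_{NMM}^{NT}(x)\,\mathbb{P}(agg=NT\mid\text{Filled}) + G_{NMM}^{IT}(x)\,\mathbb{P}(agg=IT\mid\text{Filled}),
\]
and identify the two conditional gains. The informed-trader leg is unchanged from the $\theta=0$ analysis: a fill by the IT corresponds to a $B$-jump with $B>x$, so the adverse selection is governed entirely by the jump size and $G_{NMM}^{IT}(x) = x - \mathbb{E}(B\mid B>x)$. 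The noise-trader toxicity does not enter this leg because the jump process and the noise-trade process are assumed independent.

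The genuinely new ingredient is $G_{NMM}^{NT}(x)$. On the ask side a fill by the noise trader forces the noise trader to be a buyer, i.e. $X_j=1$, and by the efficient-price dynamics this very trade shifts the efficient price by $\theta(X_j-\rho X_{j-1})$. Booking the market maker's profit against the \emph{post-trade} efficient value, I would argue that $G_{NMM}^{NT}(x)=x-\Theta$, with $\Theta=\theta\,\mathbb{E}(X_j-\rho X_{j-1}\mid X_j=1)$: the premium $x$ collected is eroded by the expected toxic move $\Theta$ induced by the very order that filled the quote. The care here is to justify that this impact enters additively and is evaluated conditionally on $X_j=1$ (the side imposed by the fill), not unconditionally, and that the relevant driver is the surprise term $X_j-\rho X_{j-1}$ rather than $X_j$ alone.

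It then remains to supply the fill probabilities, which are exactly the $f=1$ specialization of the IMM computation since the noise market maker cannot cancel (cf. the remark following Proposition~\ref{th:Gain_NMM}): $\mathbb{P}(\text{Filled}) = r\,\mathbb{P}(B>x) + (1-r)\,\mathbb{P}(Q^{u}>L(x))$, whence $p_{IT}:=\mathbb{P}(agg=IT\mid\text{Filled})=\frac{r\mathbb{P}(B>x)}{(1-r)\mathbb{P}(Q^{u}>L(x))+r\mathbb{P}(B>x)}$. Combining with $p_{NT}+p_{IT}=1$ gives
\[
G_{NMM}(x) = (x-\Theta)p_{NT} + \big(x - \mathbb{E}(B\mid B>x)\big)p_{IT} = x - \Theta + \Big(\Theta - \tfrac{\mathbb{E}(B\mathbf{1}_{B>x})}{\mathbb{P}(B>x)}\Big)p_{IT},
\]
which is the claimed identity once $\mathbb{E}(B\mid B>x)$ is written as $\mathbb{E}(B\mathbf{1}_{B>x})/\mathbb{P}(B>x)$.

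I expect the only real obstacle to be the modeling justification of $G_{NMM}^{NT}(x)=x-\Theta$, namely pinning down the reference price against which the profit is measured and confirming the additive, conditional form of the toxic impact; everything downstream is identical algebra to the $\theta=0$ case, and the NMM formula is recovered from the IMM one (Proposition~\ref{th:Gain_IMM_toxicity}) by the same $f=1$ substitution.
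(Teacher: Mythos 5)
Your proof is correct and follows essentially the same route as the paper: the paper proves the IMM toxicity case by conditioning on the aggressor with $G^{NT}(x)=x-\Theta$ and $G^{IT}(x)=x-\mathbb{E}(B\mid B>x)$, then dismisses the NMM case as ``derived in a similar way,'' which is precisely the $f=1$ specialization you carry out. Your added justification of the toxic leg $G_{NMM}^{NT}(x)=x-\Theta$ (conditioning on $X_j=1$ and booking against the post-trade efficient price) is a welcome elaboration of a step the paper merely asserts.
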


\begin{proof}
We provide the proof in the case of the informed market maker, the case of the noise market maker can be derived in a similar way. We have:
\begin{alignat}{10}
    G_{IMM}(x)=G_{IMM}^{NT}(x) \mathbb{P}(agg=NT \mid \text {Filled})+G_{IMM}^{IT}(x) \mathbb{P}(agg=IT \mid \text {Filled}) \notag \\
    G_{IMM}^{NT}(x)=x-\theta \mathbb{E}\left(X_{j}-\rho X_{j-1} \mid X_{j}=1\right)\   \emph{ and } G_{IMM}^{IT}(x)=x-\mathbb{E}(B \mid B>x) \notag 
\end{alignat}
\emph{The rest of the proof is as in \eqref{proof:Gain_IMM}}.
\end{proof}
We now give the shape of the book and the bid-ask spread.

\begin{theorem}[Cumulative LOB shape with $\theta \neq 0$]
  \label{th:cumulative_lob_shape_toxic}
  \begin{enumerate}
      \item The cumulative LOB shape for the noise market makers is smaller than the one of the informed market makers.
      \item The effective (visible) LOB shape is thus imposed by the informed market makers and is given by
\begin{equation}
L(x)=F_{u}^{-1}\left[1+\frac{r f}{1-r} \frac{x}{x-\Theta}-\frac{f r}{1-r} \frac{x}{x-\Theta} \mathbb{E}\left(\max \left(\frac{B}{x}, 1\right)\right)\right]
\end{equation}
    \item The shape function is decreasing with respect to f and with respect to r; it is increasing with respect to x.
     \item The cumulative LOB shape satisfies $L(x)=0$ for $0 \leq x \leq \phi_{\theta}$ and for $x > \phi_{\theta}$ it is increasing where $\phi_{\theta}$ is the unique solution of the following equation:
     \begin{equation}
     \label{eq:spread_toxicity}
\mathbb{E}\left[\max \left(\frac{B}{\phi_{\theta}}, 1\right)\right]=1+\frac{1}{2 r f}(1-r) \frac{\phi_{\theta}-\Theta}{\phi_{\theta}}
\end{equation}
  \end{enumerate}
\end{theorem}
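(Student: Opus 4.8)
The plan is to obtain every part from the break-even conditions $G_{IMM}(x)=0$ and $G_{NMM}(x)=0$, exactly as in the proof of Theorem~\ref{th:cumulative_lob_shape}, but starting now from the gain formulas of Propositions~\ref{th:Gain_IMM_toxicity} and~\ref{th:Gain_NMM_toxicity}. Setting $G_{IMM}(x)=0$ and clearing the denominator $(1-r)\mathbb{P}(Q^{u}>L(x))+rf\mathbb{P}(B>x)$, the terms carrying $\Theta\,\mathbb{P}(B>x)$ cancel, leaving
\begin{equation*}
(x-\Theta)(1-r)\,\mathbb{P}\!\left(Q^{u}>L_{i}(x)\right) = rf\left(\mathbb{E}(B\mathbf{1}_{B>x})-x\,\mathbb{P}(B>x)\right).
\end{equation*}
Using the identity $\mathbb{E}[\max(B/x,1)]=\mathbb{E}(B\mathbf{1}_{B>x})/x+\mathbb{P}(B<x)$ already invoked earlier, the right-hand bracket equals $x\big(\mathbb{E}[\max(B/x,1)]-1\big)$, and solving for $\mathbb{P}(Q^{u}>L_{i}(x))=1-F_{u}(L_{i}(x))$ and applying $F_{u}^{-1}$ yields point~2. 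The same computation with $f$ replaced by $1$ in the numerator produces $L_{u}(x)$, and subtracting the two $F_u$-arguments gives $F_{u}(L_{i}(x))-F_{u}(L_{u}(x))=\frac{(1-f)r}{1-r}\frac{x}{x-\Theta}\big(\mathbb{E}[\max(B/x,1)]-1\big)$; since this is nonnegative and $F_{u}$ is non-decreasing, point~1 ($L_{u}\le L_{i}$) follows, and the identification of $L_{i}$ as the visible shape is immediate as before.

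For point~3, because $F_{u}^{-1}$ is non-decreasing it suffices to study the bracket $g(x,f,r):=1-\frac{rf}{1-r}\frac{x}{x-\Theta}\big(M(x)-1\big)$, where I abbreviate $M(x):=\mathbb{E}[\max(B/x,1)]\ge 1$. Monotonicity in $f$ and $r$ is direct: $g$ is affine and decreasing in $f$, and $r\mapsto rf/(1-r)$ is increasing, so $g$ decreases in both. The delicate claim is that $g$ increases in $x$, i.e. that $x\mapsto\frac{x}{x-\Theta}\big(M(x)-1\big)$ decreases. I would first record that $M(x)$ is non-increasing (as $\max(B/x,1)$ decreases pointwise in $x$) and that $\frac{d}{dx}\frac{x}{x-\Theta}=\frac{-\Theta}{(x-\Theta)^{2}}$, so the product of the two positive factors is decreasing precisely when $\Theta\ge 0$ and we remain in the region $x>\Theta$.

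The main obstacle is therefore to justify that the relevant domain is $x>\Theta$ with $\Theta\ge 0$, since outside it the factor $x/(x-\Theta)$ changes sign and both the monotonicity and the break-even interpretation fail. I would resolve this by evaluating $\Theta=\theta\,\mathbb{E}(X_{j}-\rho X_{j-1}\mid X_{j}=1)$: under the symmetry of the sign process, $\mathbb{E}(X_{j-1}\mid X_{j}=1)=2\gamma-1=\rho$, whence $\Theta=\theta(1-\rho^{2})\ge 0$, so the factor is well-signed for every admissible price $x>\Theta$. This identification is what makes point~3 go through and is the anchor for point~4.

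Finally, point~4 combines the strict monotonicity just established with the boundary behaviour of $g$. As $x\downarrow\Theta$ we have $x/(x-\Theta)\to+\infty$ while $M(x)-1$ stays positive (assuming $\mathbb{P}(B>\Theta)>0$), so $g\to-\infty$; as $x\to\infty$, $\frac{x}{x-\Theta}(M(x)-1)=\frac{1}{x-\Theta}\mathbb{E}[(B-x)^{+}]\to 0$, so $g\to 1$. By the intermediate value theorem together with strict monotonicity there is a unique $\phi_{\theta}>\Theta$ with $g(\phi_{\theta})=\tfrac12$; using $F_{u}^{-1}(\tfrac12)=0$ this is exactly the price where $L_{i}$ changes sign. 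For $x<\phi_{\theta}$ the bracket lies below $\tfrac12$, the break-even shape would be negative, and liquidity is not provided ($L(x)=0$); for $x>\phi_{\theta}$ it is positive and increasing. Rearranging $g(\phi_{\theta})=\tfrac12$ then yields equation~\eqref{eq:spread_toxicity}, completing the proof.
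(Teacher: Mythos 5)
Your proof is correct and follows essentially the same route as the paper, whose own proof simply states that the $\theta \neq 0$ case is obtained by repeating the arguments of Theorems \ref{th:cumulative_lob_shape} and \ref{th:bid_ask_spread_no_tick} with the gains of Propositions \ref{th:Gain_IMM_toxicity} and \ref{th:Gain_NMM_toxicity} --- which is exactly what you do (break-even conditions, the identity $\mathbb{E}[\max(B/x,1)]=\mathbb{E}(B\mathbf{1}_{B>x})/x+\mathbb{P}(B<x)$, monotonicity of $F_u^{-1}$). The one detail you add beyond the paper's implicit treatment is the identification $\Theta=\theta(1-\rho^{2})\geq 0$, which cleanly justifies the restriction to $x>\Theta$ and the monotonicity in $x$; this is a welcome clarification but not a different approach.
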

The proof can be derived in a very similar way than the case $\theta = 0$ with the gain in the two previous propositions \eqref{th:Gain_IMM_toxicity} and \eqref{th:Gain_NMM_toxicity}.
The first remark to make is that as expected $\phi_{\theta}>\Theta$ as the left term in \eqref{eq:spread_toxicity} is greater than 1 and in order to have the right term greater than 1 it is necessary to have $\phi_{\theta}>\Theta$.

The second point to remark is that the spread in this case is greater than the spread in the case $\theta = 0$ as the right part of \eqref{eq:spread_toxicity} is decreasing with $\Theta$ and the minimum is for $\theta = 0$.

The model tells us that adding toxicity from the noise traders just increase the bid-ask spread. Choosing a general impact function in this framework does not give a closed formula for the LOB shape but an implicit equation between the different parameters of the model. The qualitative conclusions do not change but to not overcharge this paper we decide to not add it. The link to the discrete LOB can be derived without difficulty as done in the previous part.

\section{Extension of the model to heterogeneously informed market makers}
\label{sec:extension}

Our goal now is to extend the baseline model in the case there are many sources of jumps in the efficient price and for each kind of jump there are specialized informed market makers that can only capture this kind of jumps. We still assume that there is a race for order insertion between informed market makers and informed traders in the book as soon as there is jump. Our goal is to assess the impact of having competition between many market makers with different type of signals. We will assume in the following that there is no toxicity from noise traders ($i.e$ $\theta = 0$) as we have seen that adding this does not fundamentally change the conclusions drawn.

Let's denote by $n$ the number of different sources of jumps (so the number of different types of informed market makers\footnote{one can think about it as specialized types of market makers: for example some market makers have very good models to capture information with lead-lag, others to process news, etc}). We will start by the case $n = 2$ and derive the explicit formula before giving the formula for the general $n$-case.

We consider an asset whose efficient price is given by
$$
P(t)=P_{0}+\sum_{l=1}^{N_{t}} B_{l}^{0}+\sum_{j=1}^{N_{t}^{\prime}} B_{j}^{1}
$$
where $\sum_{l=1}^{N_{t}} B_{l}^{0}$ (resp $\sum_{l=1}^{N_{t}^{\prime}} B_{l}^{1}$) is a compound Poisson process and $(N_{t})_{t \geq 0}$ (resp $(N_{t}^{\prime})_{t \geq 0}$) is a Poisson process with intensity $\lambda_{i}^{0}$ (resp $\lambda_{i}^{1}$) and the $(B_{l}^{0})_{l>0}$ (resp $(B_{l}^{1})_{l>0}$) are $i.i.d$ square integrable random variables. We suppose there are two types of informed traders (IT0 and IT1) and two types of market makers (IMM0 and IMM1). We recall that we assume that IT0 and IMM0 (resp IT1 and IMM1) can only capture the jumps of $(B_{l}^{0})_{l>0}$ (resp $(B_{l}^{1})_{l>0}$). We keep our race assumption between IT0 and IMM0 (resp IT1 and IMM1) at a time of efficient price jump with a race parameter $f$ that we assume to be the same for all. We still assume the presence of noise traders whose trade volumes are denoted as previously by $(Q_{j}^{u})_{j>0}$.

We define $r_{0}:=\frac{\lambda_{i}^{0}}{\lambda_{i}^{1}+\lambda_{i}^{0}+\lambda_{u}}$ and $r_{1}:=\frac{\lambda_{i}^{1}}{\lambda_{i}^{1}+\lambda_{i}^{0}+\lambda_{u}}$

We will derive the gain of a new infinitesimal order added by the market makers conditional on
being filled right now.

\begin{proposition}\label{th:Gain_IMM_MULTISOURCES}
For $j \in \{0,1\}$ he gain of a new infinitesimal order submitted at price level x by the
  informed market maker $j$ conditional on being filled is given by
  \begin{equation}
G_{I M M, j}(x)=x-\frac{r_{j} f \mathbb{E}\left(B 1_{B>x}\right)}{\left(1-r_{0}-r_{1}\right) \mathbb{P}\left(Q^{u}>L(x)\right)+r_{j} f \mathbb{P}\left(B^{j}>x\right)+r_{1-j} \mathbb{P}\left(B^{1-j}>x\right)}
\end{equation}
\end{proposition}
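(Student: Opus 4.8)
The plan is to mirror exactly the conditioning argument used for Proposition~\ref{th:Gain_IMM}, now with three possible aggressors instead of two. I would write
\[
G_{IMM,j}(x) = G_{IMM,j}^{NT}(x)\,\mathbb{P}(agg=NT\mid\text{Filled}) + G_{IMM,j}^{IT_j}(x)\,\mathbb{P}(agg=IT_j\mid\text{Filled}) + G_{IMM,j}^{IT_{1-j}}(x)\,\mathbb{P}(agg=IT_{1-j}\mid\text{Filled}),
\]
decomposing the fill event according to whether the infinitesimal order is consumed by the noise trader, by its matched informed trader $IT_j$, or by the non-matched informed trader $IT_{1-j}$.

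First I would compute $\mathbb{P}(\text{Filled})$ by the law of total probability over the type of the incoming event. A type-$j$ jump occurs with probability $r_j$, and IMM$_j$'s back-of-queue order survives to be hit only if he loses the cancellation race (probability $f$) and the jump exceeds $x$, contributing $r_j f\,\mathbb{P}(B^j>x)$. A type-$(1-j)$ jump occurs with probability $r_{1-j}$; since this is not IMM$_j$'s signal he gets no cancellation protection, so his order is hit whenever $B^{1-j}>x$, contributing $r_{1-j}\,\mathbb{P}(B^{1-j}>x)$ with no factor $f$. A noise event occurs with probability $1-r_0-r_1$ and fills the order when $Q^u>L(x)$. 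Summing reproduces exactly the denominator in the statement, which is reassuring.

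Next I would specify the conditional gains: $G_{IMM,j}^{NT}(x)=x$ (no adverse move, since $\theta=0$), and $G_{IMM,j}^{IT_j}(x)=x-\mathbb{E}(B^j\mid B^j>x)$ by the same reasoning as in the single-source proof. Substituting these, using $\mathbb{P}(agg=IT_j\mid\text{Filled})=r_j f\,\mathbb{P}(B^j>x)/\mathbb{P}(\text{Filled})$ together with the identity $\mathbb{E}(B^j\mid B^j>x)\,\mathbb{P}(B^j>x)=\mathbb{E}(B^j\mathbf{1}_{B^j>x})$, yields the matched-trader adverse-selection term $r_j f\,\mathbb{E}(B\mathbf{1}_{B>x})$ in the numerator over $\mathbb{P}(\text{Filled})$.

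The step I expect to be the main obstacle, and the genuinely new ingredient relative to the single-source proof, is pinning down $G_{IMM,j}^{IT_{1-j}}(x)$, the gain when the order is consumed by the non-matched informed trader. The denominator already records that IMM$_j$ is exposed to this cross-type flow without any $f$-protection, so a fully faithful accounting would assign this fill the adverse value $x-\mathbb{E}(B^{1-j}\mid B^{1-j}>x)$ and hence add a second cross-type loss term $r_{1-j}\,\mathbb{E}(B^{1-j}\mathbf{1}_{B^{1-j}>x})$ to the numerator. Since the stated formula retains only the matched-trader term, before finalizing I would check against the model's conventions whether IMM$_j$ is assumed to escape adverse selection from $IT_{1-j}$ (equivalently $G_{IMM,j}^{IT_{1-j}}(x)=x$), in which case the clean stated numerator follows directly; otherwise the numerator should carry the extra cross-type contribution. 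Resolving this one modeling point is the crux of the argument, everything else being the routine Bayes-rule bookkeeping inherited from Proposition~\ref{th:Gain_IMM}.
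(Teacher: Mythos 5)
Your decomposition is exactly the paper's: condition on the three aggressor types, compute $\mathbb{P}(\mathrm{Filled})$ by total probability over the type of the incoming event, and apply Bayes' rule; the routine part of your argument matches the paper's proof line for line. The crux you flagged is real, and the paper's own proof resolves it in favor of your ``fully faithful accounting'': it explicitly sets $G_{IMM1}^{agg=IT0}(x)=x-\mathbb{E}\left(B^{0}\mid B^{0}>x\right)$, i.e.\ the non-matched informed trader \emph{does} adversely select IMM$_1$, with no $f$-protection. Carrying this through the same bookkeeping gives
\begin{equation*}
G_{IMM,j}(x)=x-\frac{r_{j} f\, \mathbb{E}\left(B^{j} \mathbf{1}_{B^{j}>x}\right)+r_{1-j}\, \mathbb{E}\left(B^{1-j} \mathbf{1}_{B^{1-j}>x}\right)}{\left(1-r_{0}-r_{1}\right) \mathbb{P}\left(Q^{u}>L(x)\right)+r_{j} f\, \mathbb{P}\left(B^{j}>x\right)+r_{1-j}\, \mathbb{P}\left(B^{1-j}>x\right)},
\end{equation*}
so the proposition as printed, with only the matched-trader term in the numerator, is a typo in the statement rather than a modeling choice. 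Your alternative option (a), $G_{IMM,j}^{IT_{1-j}}(x)=x$, would be internally inconsistent: the denominator already counts fills triggered by type-$(1-j)$ jumps exceeding $x$, and on such a fill the asset value moves by $B^{1-j}>x$, so the gain cannot be $x$. You can also confirm the two-term version downstream: imposing the break-even condition $G_{IMM,j}(x)=0$ on it yields exactly the $L_{j}(x)$ of Theorem~\ref{th:cumulative_lob_shape_2_sources}, in which both $\mathbb{E}\left(\max\left(B^{j}/x,1\right)\right)$ and $\mathbb{E}\left(\max\left(B^{1-j}/x,1\right)\right)$ appear; the single-term version would instead leave a bare $\mathbb{P}\left(B^{1-j}>x\right)$ in that formula, contradicting the theorem. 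In short, your proof strategy is the paper's, and the one point you deliberately left open should be settled by including the cross-type adverse-selection term in the numerator.
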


\begin{proof}
Without loss of generality we give the proof for IMM1.
\begin{alignat}{10}
    G_{IMM1}(x)=G_{IMM1}^{agg=NT}(x)[1-\mathbb{P}(agg=IT1 \mid \text {Filled})-\mathbb{P}(agg=IT0 \mid \text {Filled})]\notag \\
    +G_{I M M 1}^{a g g=I T 1}(x)\mathbb{P}(agg=IT1 \mid \text {Filled}) \notag \\
    +G_{I M M 1}^{a g g=I T 0}(x) \mathbb{P}(a g g=I T 0 \mid \text {Filled}) \notag \\
    \mathbb{P}(agg=IT1 \mid \text {Filled})=\frac{\mathbb{P}(agg=IT1, \text {Filled})}{\mathbb{P}(\text {Filled})}=\frac{r_{1} f \mathbb{P}\left(B^{1}>x\right)}{\left(1-r_{1}-r_{0}\right) \mathbb{P}\left(Q^{u}>L(x)\right)+r_{1} f \mathbb{P}\left(B^{1}>x\right)+r_{0} \mathbb{P}\left(B^{0}>x\right)} \notag \\
    \mathbb{P}(agg=IT0 \mid \text {Filled})=\frac{\mathbb{P}(agg=IT0, \text {Filled})}{\mathbb{P}(\text {Filled})}=\frac{r_{0} \mathbb{P}\left(B^{1}>x\right)}{\left(1-r_{1}-r_{0}\right) \mathbb{P}\left(Q^{u}>L(x)\right)+r_{1} \mathbb{P}\left(B^{1}>x\right)+r_{0} \mathbb{P}\left(B^{0}>x\right)} \notag \\
    G_{I M M 1}^{a g g=NT}(x)=x \emph{ , }
    G_{I M M 1}^{a g g=I T 1}(x)=x-\mathbb{E}\left(B^{1} \mid B^{1}>x\right) \emph{ and }
    G_{I M M 1}^{a g g=I T 0}(x)=x-\mathbb{E}\left(B^{0} \mid B^{0}>x\right) \notag
\end{alignat}
\emph{From this it is easy to get the result of the proposition}.
\end{proof}
To derive the LOB shape let's remark that each of the three types of market makers will derive his optimal shape based on his gain. The main results of the first part do not change. The informed market makers will put a supplementary volume compare to the noise market makers. We give in the following theorem the cumulative LOB shape for each informed market maker.

\begin{theorem}\label{th:cumulative_lob_shape_2_sources}
  \begin{enumerate}
      \item For $j \in \{0,1\}$ he optimal cumulative LOB shape for informed market maker $j$ is given by 
      \begin{equation}
L_{j}(x)=F_{u}^{-1}\left[1+\frac{r_{j} f+r_{1-j}}{1-r_{1}-r_{0}}-\frac{1}{1-r_{0}-r_{1}}\left(r_{j} f \mathbb{E}\left(\max \left(\frac{B^{j}}{x}, 1\right)\right)+r_{1-j} \mathbb{E}\left(\max \left(\frac{B^{1-j}}{x}, 1\right)\right)\right]\right.
\end{equation}
    \item The effective shape of the book is given by 
    \begin{equation}
L(x)=F_{u}^{-1}\left[\max \left(F_{u}\left(L_{o}(x), F_{u}\left(L_{1}(x)\right)\right]\right.\right.
\end{equation}
    \item When $r_{0}r_{1} > 0$ and when $f = 0$ the shape function is not infinite as in the case with only one informed market maker.
  \end{enumerate}
\end{theorem}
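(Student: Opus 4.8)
The plan is to mirror the single-source derivation of Theorem~\ref{th:cumulative_lob_shape}, now tracking the two distinct adverse-selection channels that an informed market maker of type $j$ faces. First I would record the full break-even condition. From the proof of Proposition~\ref{th:Gain_IMM_MULTISOURCES}, the resting order of IMM$j$ is adverse selected both by IT$j$ (against whom it enjoys the race protection, so the relevant mass is $r_j f\,\mathbb{P}(B^j>x)$) and by IT$_{1-j}$ (against whose jumps it cannot cancel, contributing the full mass $r_{1-j}\,\mathbb{P}(B^{1-j}>x)$), while the noise trader fills it with probability $(1-r_0-r_1)\mathbb{P}(Q^u>L(x))$. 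Writing $G_{IMM,j}(x)=0$ therefore reads
\[
x\bigl[(1-r_0-r_1)\mathbb{P}(Q^u>L_j(x))+r_j f\,\mathbb{P}(B^j>x)+r_{1-j}\,\mathbb{P}(B^{1-j}>x)\bigr]=r_j f\,\mathbb{E}(B^j 1_{B^j>x})+r_{1-j}\,\mathbb{E}(B^{1-j}1_{B^{1-j}>x}).
\]

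Next I would solve this for $\mathbb{P}(Q^u>L_j(x))=1-F_u(L_j(x))$ and apply the same identity used before, namely $\mathbb{E}[\max(B/x,1)]=\mathbb{E}(B 1_{B>x})/x+\mathbb{P}(B\le x)$, separately to $B^j$ and to $B^{1-j}$. The key observation is that the $\mathbb{P}(B^j>x)$ and $\mathbb{P}(B^{1-j}>x)$ terms cancel between the two sides, leaving $(1-r_0-r_1)\bigl(1-F_u(L_j(x))\bigr)=r_j f\bigl(\mathbb{E}[\max(B^j/x,1)]-1\bigr)+r_{1-j}\bigl(\mathbb{E}[\max(B^{1-j}/x,1)]-1\bigr)$. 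Rearranging for $F_u(L_j(x))$ and applying the increasing map $F_u^{-1}$ yields exactly the expression in part~1.

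For part~2 I would repeat the break-even comparison argument from Theorem~\ref{th:cumulative_lob_shape}: each type of informed market maker keeps adding liquidity up to its own vanishing-gain level, so at each price $x$ the visible cumulative depth is whichever of $L_0(x),L_1(x)$ is larger; since $F_u$ is non-decreasing this equals $F_u^{-1}[\max(F_u(L_0(x)),F_u(L_1(x)))]$. For part~3 I would simply set $f=0$ in the part~1 formula, which collapses to $F_u(L_j(x))=1+\frac{r_{1-j}}{1-r_0-r_1}\bigl(1-\mathbb{E}[\max(B^{1-j}/x,1)]\bigr)$. Because $\max(\cdot,1)\ge 1$ we have $\mathbb{E}[\max(B^{1-j}/x,1)]\ge 1$, and whenever $\mathbb{P}(B^{1-j}>x)>0$ the inequality is strict, so the bracket is strictly less than $1$ provided $r_{1-j}>0$; the hypothesis $r_0 r_1>0$ guarantees $r_{1-j}>0$. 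Hence $F_u(L_j(x))<1$ and $L_j(x)$ remains finite, in contrast to the single-source case where $f=0$ forced $L=F_u^{-1}(1)=\infty$.

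The main obstacle, and the conceptual payoff, is recognizing that the gain numerator must carry \emph{both} adverse-selection terms (the proposition as displayed appears to drop the $r_{1-j}\mathbb{E}(B^{1-j}1_{B^{1-j}>x})$ contribution, although its own proof produces it): it is precisely this surviving cross-term at $f=0$ that prevents the blow-up, since an IMM of type $j$ is still exposed to the informed trader of the \emph{other} type whose jumps it cannot anticipate. The remaining work, namely the bookkeeping in the cancellation of the survival-probability terms and the strictness check in part~3, is routine.
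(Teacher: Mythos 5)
Your proof is correct and follows exactly the route the paper intends: you impose the break-even condition $G_{IMM,j}(x)=0$ using the gain from Proposition~\ref{th:Gain_IMM_MULTISOURCES} and apply the same identity $\mathbb{E}\left[\max\left(\frac{B}{x},1\right)\right]=\frac{\mathbb{E}\left(B 1_{B>x}\right)}{x}+\mathbb{P}(B\le x)$ to each jump source separately, precisely as in the single-source Theorem~\ref{th:cumulative_lob_shape}, and your treatment of parts 2 and 3 matches the paper's (unstated) argument. You are also right to flag that the proposition as displayed omits the cross-term $r_{1-j}\mathbb{E}\left(B^{1-j}1_{B^{1-j}>x}\right)$ in the numerator even though its own proof produces it and the theorem's formula requires it --- that is a typo in the paper, not a defect in your derivation, and keeping that term is exactly what makes part 3 (finiteness of liquidity at $f=0$ when $r_0 r_1>0$) go through.
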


The proof of the theorem is straightforward using the previous proposition and the break even condition for the different market makers.

The interesting point when considering many market makers is that a monopolistic situation is optimal for liquidity. Indeed in the case with just one informed market maker, we saw that when $f \rightarrow 0$ the liquidity became unbounded while when there is competition between the market makers having $f$ near 0 does not allow to have unbounded liquidity (all other parameters being fixed) This is in a sense logical. When there is just on informed market maker, he is willing to provide a smaller spread when a trader (noise) requests quotes, allowing a better market quality (smaller spreads and greater volumes) for liquidity traders.

Let's now give the general formula when there are $n$ types of informed market makers capturing each one a particular kind of jump. We assume that all the market makers and informed traders share the same common $f$.

\begin{theorem}\label{th:cumulative_lob_shape_n_sources}
  \begin{enumerate}
      \item For $k \in \mathbb{I} 0, n-1 \mathbb{J}$ the optimal cumulative LOB shape for informed market maker k is given by
      \begin{equation}
L_{k}(x)=F_{u}^{-1}\left[1+\frac{r_{k} f+\sum_{j \neq k} r_{j}}{1-\sum_{j} r_{j}}-\frac{1}{1-\sum_{j} r_{j}}\left(r_{k} f \mathbb{E}\left(\max \left(\frac{B^{k}}{x}, 1\right)\right)+\sum_{j \neq k} r_{j} \mathbb{E}\left(\max \left(\frac{B^{j}}{x}, 1\right)\right)\right]\right.
\end{equation}

    \item The effective shape of the book is given by 
    $$
L(x)=F_{u}^{-1} \left[ \max _{k}\left(F_{u}\left(L_{k}(x)\right)\right) \right].
$$
    \item When $f \rightarrow 0$ and $\exists k \neq j ,r_{k} r_{j}>0$ the shape function is not infinite as in the case with only one informed market maker.
  \end{enumerate}
\end{theorem}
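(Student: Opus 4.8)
The plan is to imitate the two-source argument (Theorem~\ref{th:cumulative_lob_shape_2_sources}) almost verbatim, the only new feature being that informed market maker $k$ now faces $n-1$ foreign informed traders instead of one. So I would first establish the $n$-source analogue of Proposition~\ref{th:Gain_IMM_MULTISOURCES}, then impose the break-even condition $G_{IMM,k}(x)=0$ and invert $F_u$ to read off the shape, and finally treat the $f\to 0$ limit.

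For the gain I would condition a new infinitesimal order of IMM $k$ at level $x$ on the identity of the aggressor, which is either the noise trader or one of the informed traders $IT_m$, $m\in\{0,\dots,n-1\}$. The decisive structural point, exactly as in the two-source proof, is that IMM $k$ anticipates only jumps of his own type $k$: against $IT_k$ he wins the cancellation race with probability $1-f$, so his order is adversely filled only with residual weight $r_k f\,\mathbb{P}(B^k>x)$, whereas against every foreign $IT_m$ with $m\neq k$ he cannot cancel and is filled with full weight $r_m\,\mathbb{P}(B^m>x)$. Using the per-aggressor gains $x$ (noise) and $x-\mathbb{E}(B^m\mid B^m>x)$ (informed, source $m$) and summing the conditional contributions, I obtain
\[
G_{IMM,k}(x)=x-\frac{r_k f\,\mathbb{E}(B^k\mathbf{1}_{B^k>x})+\sum_{m\neq k}r_m\,\mathbb{E}(B^m\mathbf{1}_{B^m>x})}{(1-\sum_j r_j)\,\mathbb{P}(Q^u>L(x))+r_k f\,\mathbb{P}(B^k>x)+\sum_{m\neq k}r_m\,\mathbb{P}(B^m>x)}.
\]

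Setting $G_{IMM,k}(x)=0$, clearing the denominator and dividing by $x$, I would then invoke the identity $\mathbb{E}(\max(B^m/x,1))=\mathbb{E}(B^m\mathbf{1}_{B^m>x})/x+\mathbb{P}(B^m<x)$ for each $m$. The survival-probability terms $r_k f\,\mathbb{P}(B^k>x)$ and $r_m\,\mathbb{P}(B^m>x)$ then appear on both sides and cancel, leaving $(1-\sum_j r_j)\,\mathbb{P}(Q^u>L_k(x))$ expressed purely through the $\mathbb{E}(\max(\cdot,1))$ terms; solving for $F_u(L_k(x))=1-\mathbb{P}(Q^u>L_k(x))$ and applying $F_u^{-1}$ yields part~1. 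Part~2 follows the single-source logic that the visible book is set by the market maker willing to post the most: each type $k$ breaks even only up to cumulative depth $L_k(x)$ and continues to post profitably strictly below it, so the binding depth at level $x$ is $\max_k L_k(x)$, and monotonicity of $F_u$ rewrites this as $L(x)=F_u^{-1}[\max_k F_u(L_k(x))]$.

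For part~3 I would set $f=0$ in the part~1 formula, obtaining
\[
F_u(L_k(x))=1-\frac{1}{1-\sum_j r_j}\sum_{m\neq k}r_m\,[\mathbb{E}(\max(B^m/x,1))-1].
\]
Since $\max(B^m/x,1)\geq 1$ with strict inequality on $\{B^m>x\}$, each bracket is strictly positive whenever $\mathbb{P}(B^m>x)>0$, and by hypothesis at least one foreign source has $r_m>0$; hence the subtracted sum is strictly positive, $F_u(L_k(x))<1$, and $L_k(x)$ stays finite, in contrast with the single-source case where $f\to 0$ drives the argument of $F_u^{-1}$ to $1$ and the depth to infinity. The main obstacle is not the algebra but justifying part~2 cleanly: one must argue that the heterogeneous makers, who interact through the shared $L(x)$ sitting in the denominators, do not couple in a way that breaks the simple maximum characterization, i.e. that the maker with the deepest break-even curve alone pins the visible shape while the others are inframarginal. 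Identifying the surviving $\sum_{m\neq k}r_m$ terms as irreducible adverse selection from jumps IMM $k$ cannot anticipate is precisely what makes part~3 economically meaningful.
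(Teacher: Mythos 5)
Your proposal is correct and follows essentially the same route as the paper, whose proof of this theorem is a one-line reduction to computing the conditional gain of an infinitesimal order and applying the break-even condition, exactly as you do. If anything, your version is the more careful one: you retain the foreign adverse-selection terms $\sum_{m\neq k} r_m\,\mathbb{E}\left(B^m\mathbf{1}_{B^m>x}\right)$ in the numerator of the gain, which the paper's Proposition~\ref{th:Gain_IMM_MULTISOURCES} omits (apparently a typo), even though its own proof of that proposition and the shape formula in the theorem both require them, and you explicitly flag the inframarginality argument behind the $\max_k$ characterization in part~2, which the paper leaves implicit.
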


The proof is straightforward by computing the gain of an infinitesimal order added by an informed market maker conditional on a fill and using the break even condition.
The results induced by this theorem are in line with the results discussed in the case with just two informed market makers.
We could use a more realistic modeling by allowing informed market makers to capture more than one kind of jump (and introduce different race parameters between to have a more realistic framework) but the conclusions would essentially remain the same as long as we do not suppose that there is one perfectly informed market maker that can access all the efficient price jumps.

\section{Numerical examples of LOB shape}
We give some numerical examples of LOB (ask side) for different parameters $r$, $d$ and $f$. We take a normal distribution for noise traders trade volumes with standard deviation 10 and a Pareto distribution for the absolute value of the efficient price jumps with shape 3 and scale 0.005. We take a tick size of 0.01.

\begin{figure}[H]
\centering
 \includegraphics[width=\textwidth]{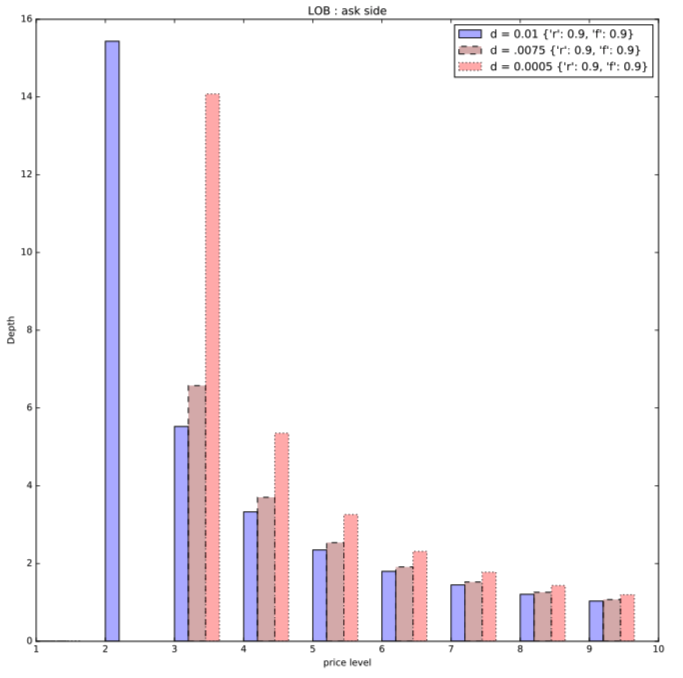}
 \caption{\textbf{LOB shapes with the corresponding parameters $r=0.9$, $f=90$ and different values of $d$}\newline
We take for the noise traders volume distributions a normal distribution with an arbitrary standard deviation of 10 and for the efficient price jumps a Pareto distribution with shape parameter of 3 and scale parameter of 0.005. We take a tick size of $0.01$.}
 \label{fig:lobr90f90}
\end{figure}

\begin{figure}[H]
\centering
 \includegraphics[scale=0.8]{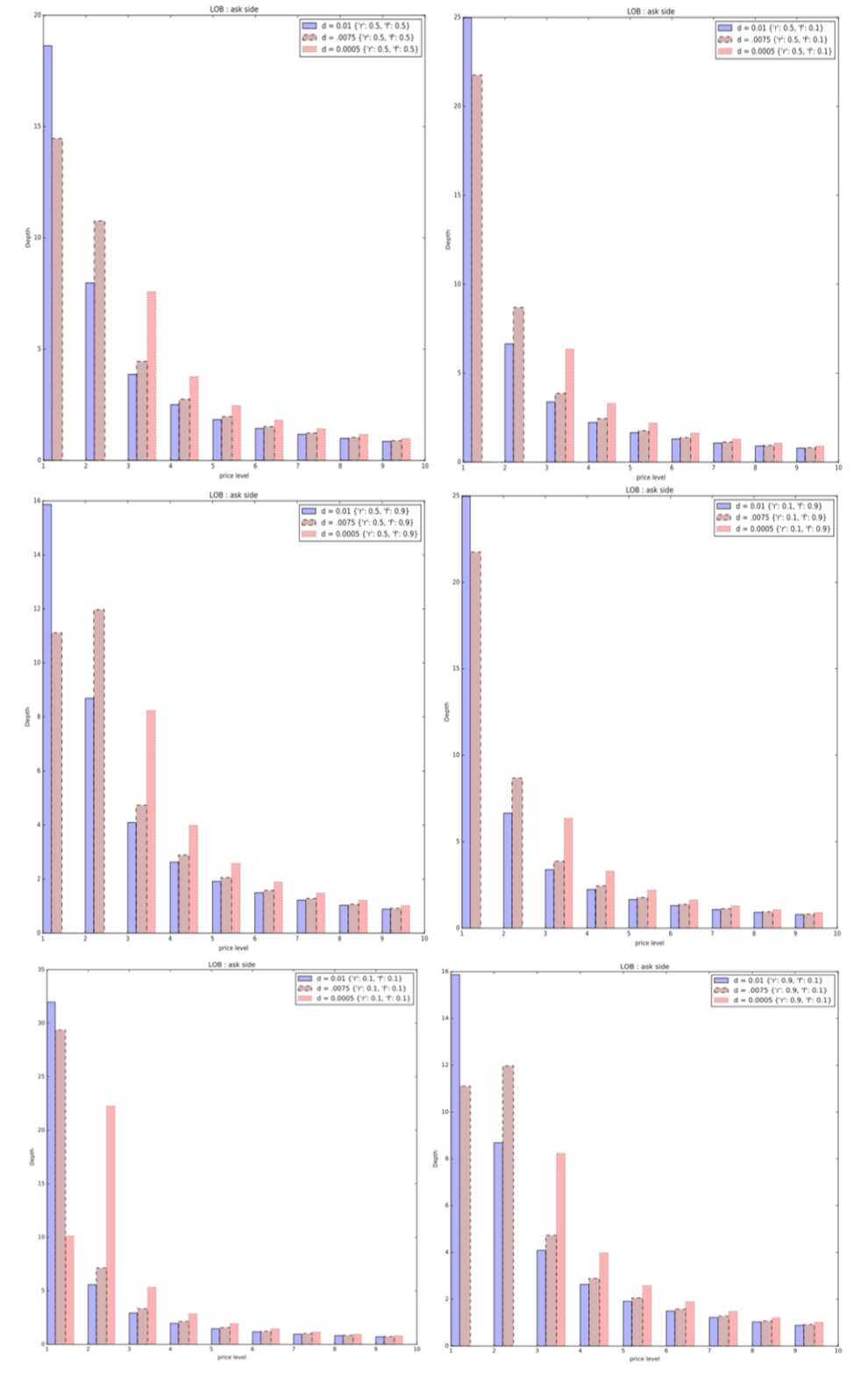}
 \caption{\textbf{LOB shapes with the corresponding parameters $r$,$f$,$d$}}
 \label{fig:lob}
\end{figure}

\section{Some foundations of the framework from empirical market facts}
We will discuss the foundations of our modeling giving some insights on the different facts justifying the different choices we made. Let's start by our assumption on the efficient price dynamics. One can think about the more general case with correlated assets. Let's assume we have two assets sharing some common dynamics with two corresponding books. In book 1 we assume there are NMM1 (Noise Market Makers 1), NT1 ( Noise Trader 1) and in book 2 NMM2 and NT2. We assume there are IMM and IT that act in the two books (the same)-the idea behind being that informed market makers in one book can act as informed traders in the other as soon as they get executed and get insight on the efficient price. Namely we assume that the dynamics of the two assets are given by
\begin{eqnarray}
    P_{t}^{1}-P_{0}^{1}=\sum_{t_{i}^{1} \leq t} \epsilon_{t_{i}^{1}}+\sum_{t_{i}^{2} \leq t} B_{t_{i}^{2}}+\sum_{t_{i}^{3} \leq t} \theta^{u, 1}\left(Q_{t_{i}^{3}}^{u, 1}\right)+\beta \sum_{t_{i}^{4} \leq t} \theta^{u, 2}\left(Q_{t_{i}^{4}}^{u, 2}\right) \notag \\
    P_{t}^{2}-P_{0}^{2}=\sum_{t_{i}^{5} \leq t} \epsilon_{t_{i}^{5}}+\sum_{t_{i}^{2} \leq t} B_{t_{i}^{2}}+\sum_{t_{i}^{4} \leq t} \theta^{u, 2}\left(Q_{t_{i}^{4}}^{u, 2}\right)+\beta \sum_{t_{i}^{3} \leq t} \theta^{u, 1}\left(Q_{t_{i}^{3}}^{u, 1}\right) \notag
\end{eqnarray}
The first term in each equation is the idiosyncratic component of each asset (that does not impact the efficient price of the other asset). The second term is the same for the two and represents the fundamental common driver of the two assets (one can think on the way news affect two futures in the same underlying with different maturities). The third term of each asset represents how the noise trader impacts the efficient price and the last term captures how the efficient price is influenced by trades in the other asset with $\beta$ a coefficient between 0 and 1. One can see $beta$ as the cross-impact\footnote{Empirical studies show evidence of considerable cross-impact effects in stock markets \protect\cite{hasbrouck2001common,pasquariello2015strategic,wang2016cross, benzaquen2017dissecting}} coefficient of the 2 assets independently of fundamental jumps. It captures for example in the case of the same asset in two exchanges how liquidity traders route their orders in different venues. Note however that the true correlation of the two assets is not $\beta$ but depend on the different jumps parameters and $2^{nd}$-order moments of the different terms. This case is general as it can take into account the different situations in which market participants trade: different assets with common underlying or the same asset traded in different exchanges. Thus our framework discussed in this paper takes into account this more general case and it is straightforward to derive the LOB shape of the two books. Notice that this framework enforces the existence of many kind of efficient price jumps but this is actually relevant with respect to what happens in markets. For example market makers can receive confirmation of trades in which they are involved before it is publicly displayed as the channels in which these information pass are different.\\

One should still not forget that LOBs drawn are ``static" LOBs. All the gains we computed are conditional on having a trade right now. Since when there is a large queue, the life of the order will not end with the next trade (and traders will not cancel and resubmit their limit orders after every single trade) the order will probably move in the queue if not executed by the next trade. As a result a dynamical model is needed to take into account the dynamics of the queue. We let this for further works or refer to \cite{moallemi2014value} for a model capturing the dynamical part of the queue position value (the authors solve this just for the best limit assuming this limit does not move which is restricting). We let this direction for further research. \\

Let's now discuss the motivations in introducing the race parameters between informed market makers and informed traders. We recall that we introduce a parameter $f$ , $0 \leq f leq 1$ to measure the probability that the informed trader order is processed before the informed market maker one (so the probability that the informed market maker cancels his order is $1-f$). The introduction of this parameter is one of the key point of the model. This parameter makes actually sense in most of the exchanges as there is a non-zero probability that 2 orders sent quasi-simultaneously are executed in the opposite sense. On CME exchange-which is one of the most liquid exchanges for futures and options in the world- for example for two orders sent in a delta time of less than $10ns$ the probability the last sent order is executed before the first one is $0.5$ (the $10ns$ delta time is an approximation of the time necessary for $100$ bits to reach the switches in the L1 layer of CME GLink. This so-called `Glink 10G" uses a $10 Gbps$ debit and in order to transmit $100$ bits the necessary time is roughly $10ns$ by a simple calculation). Another interpretation of the $f$ parameter is the randomizer some exchanges added to resort the different incoming orders or the so-called `speed bumps" that allow market makers to cancel their orders before the order sent by liquidity takers are executed.

\section{Conclusion and prospects}
In this paper, we assess the existing literature considering one generic group of market makers. We bring empirical evidence showing that we can classify market participants based on different parameters such as the moment they insert their orders, the relative volume they trade with respect to the available volume at the best limit, the number of updates of their orders etc.

We introduce an agent-based model for the LOB. Built on the Glosten-Milgrom approach and the recent Huang-Rosenbaum-Saliba, we use a zero-profit condition for the different types of market makers which enables us to derive a link between the proportion of different types of traders, the race parameter between informed market makers and informed traders, and the LOB shape. We discuss the effect of introducing a tick size.

We then extend the model to take into account toxicity from noise traders auto-correlated trades and competition between different types of informed market makers. We discuss how taking into account the noise traders trades signs auto-correlation would increase the bid-ask spread by a mechanical part due to this auto-correlation.

We discuss also how the race parameter could impact the liquidity and the bid-ask spread but did not discuss the interesting retro-effects of this race parameter that is the following: when all liquidity providers cancel simultaneously their orders, some liquidity deterioration will be caused.

We then show the model is realistic with respect to market participants motivations to trade with a discussion on how a `multi assets" model is straightforward to plug-in the framework.

We bring some justifications to the race parameter we introduce and relate it to `speed-bumps". Nevertheless with a speed-bump, an informed trader who has access to private information will likely be filed $100\%$ of the time\footnote{`Trading with a noise trader" s a private information in many real words cases: bilateral trading or any exchange which is either \textbf{non transparent} -Reuters and EBS where prices are sent periodically in a high-low fashion without volumes- or \textbf{delayed} -some dark pools where trade reporting is delayed.}. We do not discuss the feedback effects of speed-bumps which bring market makers to become incentivized to traders speeds increment, reducing the marginal cost of getting faster for HFT and consequently failing in their intended purpose of protecting market makers. We refer to \protect\cite{aoyagi2018strategic} for a more detailed description of this effect.\\

One of the main limitations of our model is that it does not capture well the market feasibility that is how a trade can occur. A market is not admissible if no one wants to be the trader or the market maker: a trader will likely not trade against a market maker if he is aware that the market maker is informed and vice versa . Indeed we do not address the mechanical substitution that would occur between traders and market makers: who would not be the informed market maker in this context ?

We do not discuss the intricate question on how inefficiencies n one market are linked to the price discovery in other markets, in particular we do not address the accuracy and transparency of data provided by intermediaries markets -which is one of the goals of regulators. In the US for example while the regulation NMS pushes to fragmentation of order flow, the improvement effect on price discovery is not really clear. Some studies show fragmentation into the dark pools raises effective spread and increases price manipulation \protect\cite{aitken2015fragmentation, chung2012regulation}. In Europe the Paris-based regulators group ESMA (European Securities and Markets Authority) aims for a regulatory regime with more transparency with the recent MiFID II -Markets in Financial Instruments Directive II- that is promoting the adoption of more transparent system and has for example capped the percentage of shares for a given product that can be traded in dark pools. In this context where regulators seem to care about the efficiency of the price discovery, adding speed-bumps would bring a supplementary loss of transparency.

Finally in our approach, we do not consider any inventory management effect and in practice it is a key point that impact liquidity (in particular when there are more informed traders than noise traders). Extending the framework to take into account inventory management is challenging but would bring more insight in the effects of race for insertion and inventory management on liquidity and price discovery.

\section*{Acknowledgments}

The author warmly thanks Cothereau A., Hattersley M. and Laffitte P. for their fruitful mentoring and discussions.

\appendix
\renewcommand*{\thesection}{\Alph{section}}
\section{Classification of market participants}
\label{app:graphs}

\begin{figure}[H]
\centering
 \includegraphics[scale=0.6]{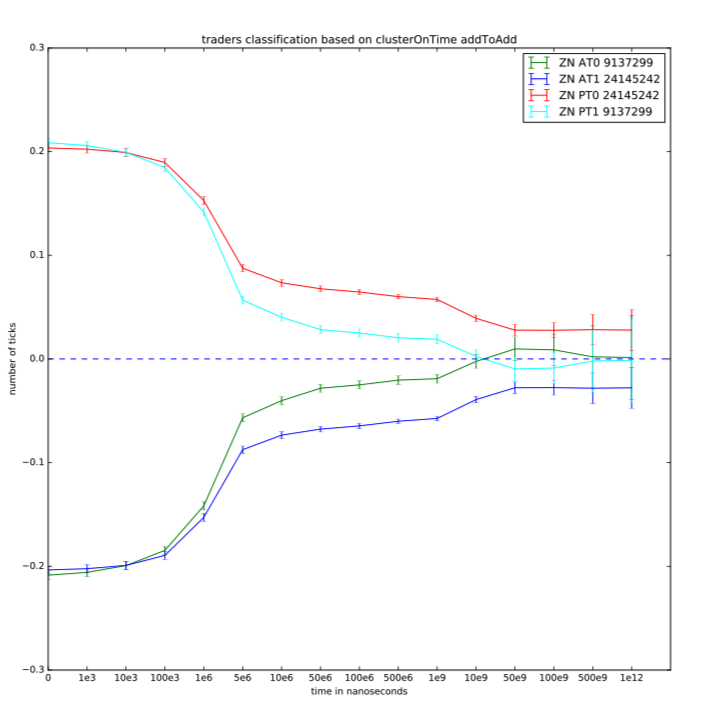}
 \caption{\textbf{ZN trade signature based on add to add duration with a threshold of $10^{7}ns$}}
 \label{fig:sigAddToAdd}
\end{figure}

\begin{figure}[H]
\centering
 \includegraphics[scale=0.3]{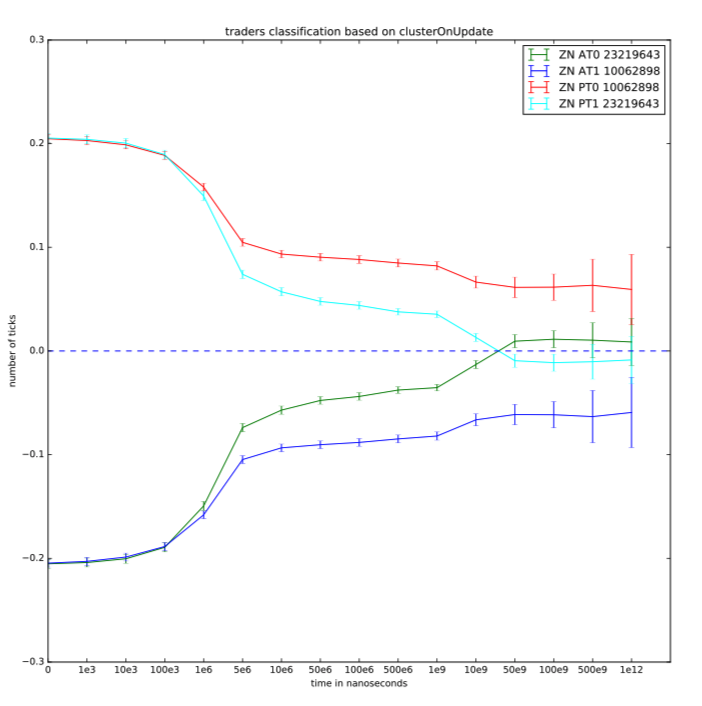}
 \caption{\textbf{ZN trade signature based on update existence between order insertion and order execution}}
 \label{fig:sigUpdate}
\end{figure}

\begin{figure}[H]
\centering
 \includegraphics[scale=0.3]{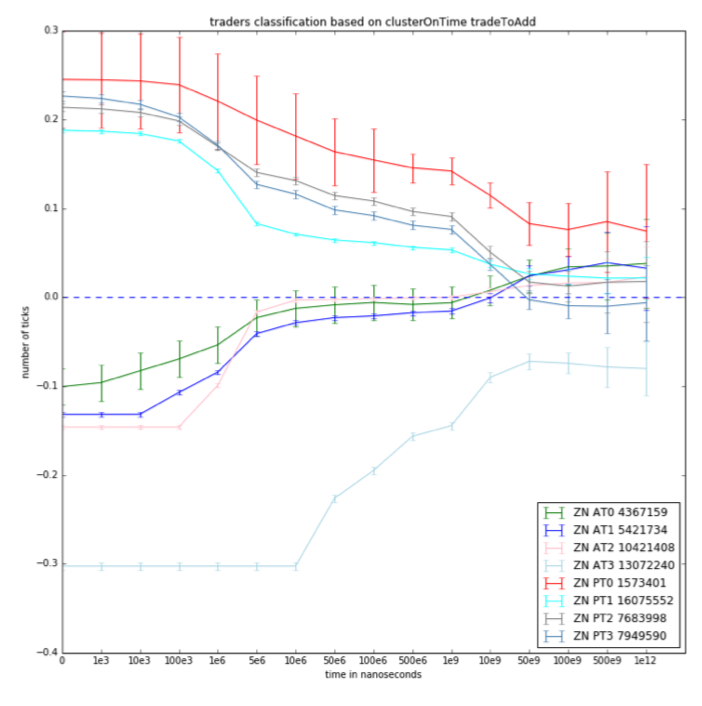}
 \caption{\textbf{ZN trade signature based on trade to add duration with thresholds $(10^{4}, 10^{7}, 10^{9})$}}
 \label{fig:sigTradeToAddMulti}
\end{figure}

\begin{figure}[H]
\centering
 \includegraphics[scale=0.5]{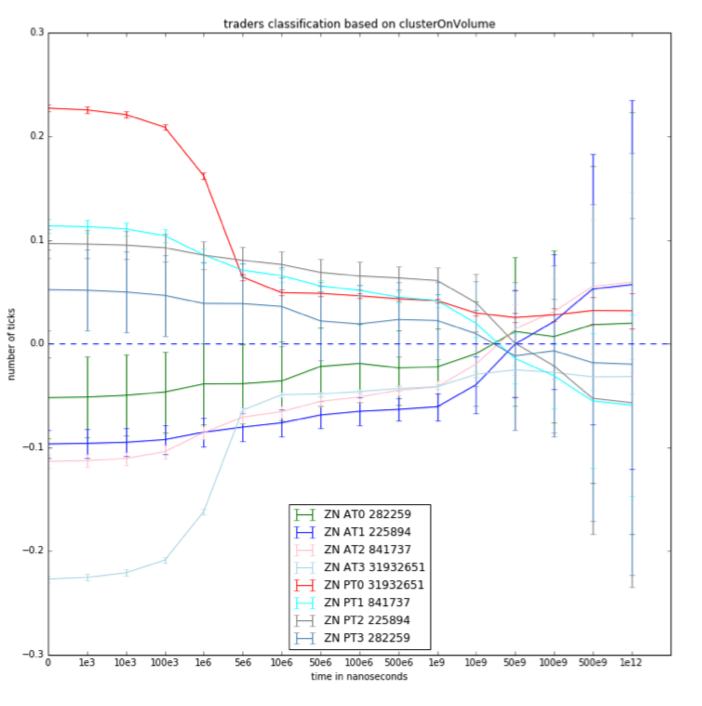}
 \caption{\textbf{ZN trade signature based ration volumes traded and available volume at best limit at trade time with thresholds $(0.25, 0.5, 0.75)$}}
 \label{fig:sigVolumeMulti}
\end{figure}

\clearpage

\printbibliography


\end{document}